\documentclass[
 reprint,
 amsmath,amssymb,
 aps,
 twocolumns
]{revtex4-2}

\usepackage{graphicx}
\usepackage{amsthm}
\usepackage{xcolor}
\usepackage{tikz}
\usetikzlibrary{shapes.geometric}

\newcommand{\chg}[1]{\textcolor{blue}{#1}}

\newtheorem{theorem}{Theorem}
\newtheorem{lemma}{Lemma}

\tikzset{
  diagram state node/.style={
    regular polygon,
    regular polygon sides=3,
    shape border rotate=0,
    draw,
    fill=black,
    minimum size=5.5pt,
    inner sep=0pt
  },
  diagram effect node/.style={
    regular polygon,
    regular polygon sides=3,
    shape border rotate=180,
    draw,
    fill=white,
    minimum size=5.5pt,
    inner sep=0pt
  },
  diagram state bond/.style={line width=0.85pt},
  diagram effect bond/.style={line width=0.85pt,densely dashed},
  diagram local wire/.style={line width=0.65pt},
  diagram contraction/.style={fill=black},
  diagram panel title/.style={font=\bfseries\small,align=center},
  diagram object label/.style={font=\small,align=center},
  diagram formula/.style={font=\small,align=center},
  diagram site label/.style={
    font=\tiny,
    text=gray!70!black,
    inner sep=0.3pt
  }
}

\newcommand{\diagramlocalstate}[2]{%
  \node[diagram state node] at (#1,-1.02) {};
  \draw[diagram local wire] (#1,-0.88)--(#1,0);
  \node[diagram object label,anchor=north] at (#1,-1.32) {$#2$};
}
\newcommand{\diagramlocaleffect}[2]{%
  \node[diagram effect node] at (#1,1.02) {};
  \draw[diagram local wire] (#1,0)--(#1,0.88);
  \node[diagram object label,anchor=south] at (#1,1.30) {$#2$};
}
\newcommand{\diagrambipstate}[3]{%
  \begingroup
  \pgfmathsetmacro{\diagrammidpoint}{(#1+#2)/2}
  \draw[diagram state bond]
    (#1,0) .. controls (#1,-0.62) and (#2,-0.62) .. (#2,0);
  \node[diagram object label,anchor=north]
    at (\diagrammidpoint,-0.56) {$#3$};
  \endgroup
}
\newcommand{\diagrambipeffect}[3]{%
  \begingroup
  \pgfmathsetmacro{\diagrammidpoint}{(#1+#2)/2}
  \draw[diagram effect bond]
    (#1,0) .. controls (#1,0.62) and (#2,0.62) .. (#2,0);
  \node[diagram object label,anchor=south]
    at (\diagrammidpoint,0.56) {$#3$};
  \endgroup
}
\newcommand{\diagramjunction}[1]{%
  \fill[diagram contraction] (#1,0) circle (1.05pt);
}
\newcommand{\diagramsitelabel}[2]{%
  \node[diagram site label,anchor=south west]
    at (#1+0.07,0.04) {$#2$};
}

\begin{document}

\title{Consistency of Generalised Probabilistic Theories is Undecidable}

\author{Serge Massar}
\email{serge.massar@ulb.be}
\affiliation{Laboratoire d’Information Quantique CP224, Universit\'e libre de Bruxelles (ULB), Av. F. D. Roosevelt 50, 1050 Bruxelles, Belgium}

\begin{abstract}

Generalised Probabilistic Theories (GPTs) provide a unifying framework encompassing classical theories, quantum theories, as well as hypothetical alternatives. We investigate both the problem of extending a system with  a finite set of transformations, and the problem of adding to a translation-invariant set of systems a finite  set of entangled states and effects, plus all their images under the translation symmetry.
We show that determining whether such extensions are consistent with the axioms of GPTs is undecidable.
The source of the undecidability is that these finite extensions can generate infinitely many conditions which must be checked: iterating transformations may produce infinitely many new transformations; and similarly, entangled states and effects may generate infinitely many new states via the analogue of teleportation.
Our results show that extending GPTs to include dynamics or entanglement encounters fundamental computability obstructions, which can only be circumvented by introducing additional physical or mathematical assumptions.
\end{abstract}

\maketitle

\section{Introduction}

Generalised Probabilistic Theories (GPTs) \cite{
M63,L70,BL70,G73,M69,M74,HN74,H01,B07,CDAP10,H11,MM11,BW19} are a framework for formulating physical theories, including both classical and quantum mechanics, as well as alternatives not encountered in nature but of conceptual interest. Since their initial formulation, GPTs have become central to the study of foundations of quantum mechanics.  They are the basis for many of the modern axiomatic approaches to quantum theory, provide a framework for the study of nonlocal correlations stronger than those allowed by quantum mechanics \cite{B07,P14}, have been used to show that many properties 
and concepts initially thought to be unique to quantum theory are in fact much more general, including information-theoretic principles such as no-cloning and no-broadcasting, foundational notions such as contextuality
\cite{S21,SW21}, aspects of  thermodynamics \cite{SW14,CG15} and resource theories, and computational complexity \cite{LB15,BBH19}. For reviews on GPTs, see \cite{JH14,W18,P23}.

We show that the GPT framework exhibits a fundamental limitation: the consistency of certain natural extensions is undecidable. These include adding discretized dynamics, i.e. transformations, and including entangled states and effects. 
We show that
there  cannot exist an effective procedure for deciding whether such extensions  are compatible with the axioms of GPTs, and in particular  that all probabilities generated by the extended theory remain non-negative and bounded by one.
No algorithm can decide all instances of these questions \cite{T36}. This result has important implications for the study of GPTs, their consequences, and their interpretation.
 
Our first result concerns transformations, which describe the evolution of a GPT in discrete time. They play the role of completely positive maps in quantum theory, or stochastic matrices in classical probability theory. We ask two questions. First, can a given finite list of transformations be included in \emph{some} GPT? Second, if the state and effect spaces are already fixed, are they compatible with the proposed list of transformations? Both questions are undecidable, even though all the data used to describe an instance are finite. 
The origin of the undecidability is that composing transformations can produce infinitely many new transformations which were not in the initial set. The structure of the transformations generated under composition can be so complex that their consistency cannot be decided by a general algorithm.
The iterations of transformations are described by products of matrices. 
We leverage results  on the  acceptance probability of finite state automata \cite{rabin1963,paz1971,B74,BMT77,blondel2003,H07,gimbert2010,Rote24}
and on unboundedness of matrix products \cite{BT00}  to prove these results.
The approach is quite similar to the undecidability of reachability problems in quantum theory, see in particular \cite{WCPG11}, but here the undecidable question concerns the existence of a consistent extension of the specified data.

Our second result concerns entanglement. We study a one-dimensional infinite chain whose description repeats every two sites, so that the whole chain is specified by a finite unit cell. 
We ask two questions. First, 
given finitely many kinds of entangled states and effects, together with all their translations along the chain, can they be augmented with local states and effects so  as to constitute a consistent GPT? Second, if  local states and effects are given, are they compatible with the proposed set of entangled states and effects? Both questions are undecidable, even though all the data used to describe an instance are finite. The second question is undecidable even if the entangled states and effects, and the local states and effects, are both separately consistent.

The two results are closely connected because the analogue of teleportation \cite{BBCJPW93}  for GPTs, discussed for instance in   \cite{BBLW12}, can be viewed as a probabilistic transformation. 
The teleportation can transform states into new states, and so on  {\it ad infinitum}. Are these new states compatible with the axioms of GPTs (i.e. all outcome probabilities are positive and bounded by one), or does an inconsistency arise?
This is undecidable.

Our results provide an explanation for why the development of Generalised Probabilistic Theories seems incomplete. On the one hand, a general theory of dynamics in GPTs is lacking. It would indeed be desirable to know what kinds of evolutions are compatible with different state and effect spaces. Our results show that there can be no general effective characterisation deciding this compatibility, at least for discrete-time evolution.

Similarly, a general classification  of what kinds of entanglement are possible in GPTs is lacking. 
Recent work has therefore resorted to numerical searches to find consistent multipartite GPTs which include specific entangled states and measurements \cite{DLG24}. Our work shows that there can be no general effective characterisation deciding this compatibility, at least for the infinite translation-invariant systems considered here. Our work also suggests that numerical approaches as in \cite{DLG24} will generally be very inefficient, as finite versions of undecidable problems are often NP--hard \cite{Ketal23}.

It is likely that other questions concerning the consistency of GPTs are also  undecidable. These include the consistency of continuous-time evolution, and the inclusion of entangled states and effects for systems composed of a finite number of subsystems (rather than the infinite translation-invariant systems we study here). We leave this for future work.

Our result has important implications for the  future study of GPTs and their interpretation. Indeed, one should not draw far-reaching  conclusions from the study of GPTs without knowing whether the theories under consideration admit consistent  extensions  to time evolution and to the inclusion of entanglement.
An important direction for future work is to add  physical or mathematical hypotheses  to the GPT framework in order to make the consistency of these extensions decidable, or even better,
 efficiently decidable. Clarifying these conditions will be essential for determining which GPTs can meaningfully serve as candidates for physical theories.

\section{Generalised Probabilistic Theory}
\label{sec:GPT}

\subsection{Single systems}
\label{subsec:SingleSys}

We recall the formulation of generalised probabilistic theories.
The sets of unnormalised states $C(\mathcal{S}) \subset \mathbb{R}^{d+1}$ and unnormalised effects $C(\mathcal{E}) \subset \mathbb{R}^{d+1}$ are assumed to be \emph{proper cones} in a $(d+1)$-dimensional real vector space, i.e.\ convex, closed, pointed, and generating subsets of $\mathbb{R}^{d+1}$.
(Recall that a cone contains every non-negative multiple of each of its elements. ``Pointed'' means that the cone contains no nonzero vector together with its negative. Convexity reflects that probabilistic mixtures of states (resp.\ effects) are valid states (resp.\ effects). ``Generating'' means that differences of elements of the cone span the whole vector space.\chg{)}
The fact that the cones are taken to be full-dimensional is not restrictive, see the discussion in  \cite{SSD23}. 

The cones are mutually included in their dual cones:
\[
C(\mathcal{S}) \subseteq C(\mathcal{E})^*,
\qquad
C(\mathcal{E}) \subseteq C(\mathcal{S})^*.
\]
(The dual cone of a set consists of the vectors having a non-negative scalar product with every vector in that set\chg{.}) The two inclusions above therefore say exactly that every allowed state--effect pairing is non-negative:
\begin{equation}
0 \le e^{\chg{\top}} \omega,
\qquad \forall\, e \in C(\mathcal{E}),\ \forall\, \omega \in C(\mathcal{S})
\quad \text{(positivity)}.
\label{eq:positivity}
\end{equation}
where $e^{\chg{\top}} \omega$ denotes the natural Euclidean scalar product between effects and states. 
It is often assumed that $C(\mathcal{S})$ and $C(\mathcal{E})$ are strictly dual to each other, but this is not necessary in full generality; see \cite{JL13}. We follow this more general approach.

There exists a distinguished effect $u \in C(\mathcal{E})$ called the \emph{unit effect}. The set of normalised states, denoted $\mathcal{S} $, obeys
\begin{equation}
u^{\chg{\top}} \omega = 1, \qquad \forall\, \omega \in \mathcal{S} 
\quad \text{(state normalisation)}.
\label{eq:unit}
\end{equation}
A measurement $M = \{e_1,\dots,e_n\}$ is a finite set of effects that sum to the unit effect:
\begin{equation}
\sum_{k=1}^n e_k = u 
\quad \text{(measurement normalisation)}.
\label{eq:meas_norm}
\end{equation}

The set of physical effects is the order interval
\begin{equation}
\mathcal{E}
=
\left\{
e\in C(\mathcal{E}):u-e\in C(\mathcal{E})
\right\}
=
C(\mathcal{E})\cap\bigl(u-C(\mathcal{E})\bigr).
\label{eq:physical_effects}
\end{equation}
Equivalently, $e$ is physical precisely when it can occur in the binary
measurement $\{e,u-e\}$. Since $0,u\in C(\mathcal{E})$,
Eq.~\eqref{eq:physical_effects} implies that
$0,u\in\mathcal{E}$. It also implies that the physical effect space is
closed under complementation:
\[
e\in\mathcal{E}\quad\Longrightarrow\quad u-e\in\mathcal{E}.
\]
The definition Eq.~\eqref{eq:physical_effects} also guarantees closure under coarse-graining. Indeed,
if $M=\{e_1,\ldots,e_n\}$ is a measurement and
$I\subseteq\{1,\ldots,n\}$, then
\[
\sum_{i\in I}e_i\in C(\mathcal{E}),
\qquad
u-\sum_{i\in I}e_i
=
\sum_{i\notin I}e_i\in C(\mathcal{E}),
\]
and hence $\sum_{i\in I}e_i\in\mathcal{E}$ by
Eq.~\eqref{eq:physical_effects}, where the sum over the empty set is
understood to be $0$.

Given a state $\omega$ and a measurement $M=\{e_k\}$, the probability of outcome $k$ is
\begin{equation}
P(e_k \mid \omega) = e_k^{\chg{\top}} \omega.
\label{eq:prob}
\end{equation}
The conditions \eqref{eq:positivity}, \eqref{eq:unit} and \eqref{eq:meas_norm} ensure that these probabilities are nonnegative and sum to $1$.

A set of physical effects $\mathcal F\subseteq\mathcal E$ is
\emph{tomographically complete} if it distinguishes normalised
states: whenever
\[
e^\top\omega=e^\top\omega'
\qquad\text{for every }e\in\mathcal F,
\]
one has $\omega=\omega'$.  In the present finite-dimensional setting,
this is equivalent to
$\operatorname{span}(\mathcal F \cup \{u\}) =\mathbb R^{d+1}$.  Thus the
full-dimensional physical effect spaces assumed here are
tomographically complete.  Operationally, the coordinates of a state
can therefore be reconstructed from the probabilities of a suitable
finite spanning subset of effects.

We will write $\mathcal{S}_0$ and $\mathcal{E}_0$ for the starting sets of normalised states and physical effects. We assume that the states have $d$ independent directions once normalisation is imposed, and that the effects span the full $(d+1)$-dimensional space. Equivalently, the cones generated by these sets are full-dimensional. \emph{ If the above axioms hold, we say that the pair $(\mathcal{S}_0,\mathcal{E}_0)$ is a single-party GPT.}

\subsection{Transformations}
\label{subsec:Trans}

Transformations are linear maps
\begin{equation}
T : \mathbb{R}^{d+1} \to \mathbb{R}^{d+1}, \qquad \omega \mapsto T\omega,
\end{equation}
that map states to states: $T(\mathcal{S}) \subseteq \mathcal{S}$. Hence they preserve normalisation:
\begin{equation}
u^{\chg{\top}} T \omega = u^{\chg{\top}} \omega = 1, \qquad \forall\, \omega \in \mathcal{S}.
\end{equation}
Since the normalised states affinely span the normalisation hyperplane, this is equivalent to
\begin{equation}
u^{\chg{\top}} T = u^{\chg{\top}}.
\label{eq:preserve_u}
\end{equation}

It is useful to distinguish physical transformations from candidate
transformations.  If $T_1$ and $T_2$ are transformations of a fixed
GPT, then $T_2T_1(\mathcal S)\subseteq\mathcal S$ follows
automatically.  In the decision problems below, however, we are given candidate transformations whose matrices
are initially assumed only to preserve normalisation,
$u^\top T_i=u^\top$.  We must decide whether there exists a state
space that is invariant under every finite product $T_w$ of these
candidate maps.  This is the source of the first undecidability result.

\subsection{Composite systems}
\label{subsec:Comp}

Composite systems are introduced using the tensor
product. The identification of the composite vector space with the tensor product of the individual system tensor products follows from the natural assumption of local tomography, see
Refs.~\cite{B07,JH14,KRF87,W92}.

Consider two systems $A$ and $B$ with state/effect cones
\[
C(\mathcal{S}^A),\,C(\mathcal{E}^A) \subset \mathbb{R}^{d_A+1},
\qquad
C(\mathcal{S}^B),\,C(\mathcal{E}^B) \subset \mathbb{R}^{d_B+1}.
\]
The composite system $AB$ is described by cones
\[
C(\mathcal{S}^{AB}),\,C(\mathcal{E}^{AB}) \subset \mathbb{R}^{(d_A+1)(d_B+1)},
\]
satisfying $C(\mathcal{S}^{AB}) \subseteq (C(\mathcal{E}^{AB}))^*$ and $C(\mathcal{E}^{AB}) \subseteq (C(\mathcal{S}^{AB}))^*$, together with the following consistency conditions.

\paragraph{Product states and effects.}
For any $\omega^A \in \mathcal{S}^A$, $\omega^B \in \mathcal{S}^B$,
\begin{equation}
\omega^A \otimes \omega^B \in C(\mathcal{S}^{AB}).
\label{eq:product_state}
\end{equation}
For any $e^A \in C(\mathcal{E}^A)$, $e^B \in C(\mathcal{E}^B)$,
\begin{equation}
e^A \otimes e^B \in C(\mathcal{E}^{AB}).
\label{eq:product_effect}
\end{equation}

A normalised bipartite state is \emph{separable} if it is a convex combination of product states and is \emph{entangled} otherwise. Analogously, an effect is separable if it lies in the cone generated by product effects, and is entangled otherwise. 

Two extreme spaces can be defined: the minimal tensor product contains only product
states, while the maximal tensor product contains only
product effects  \cite{B07}. Other theories such as quantum mechanics
contain both entangled states and entangled effects.

\paragraph{Unit effect and normalisation.}
The composite unit effect factorises:
\begin{equation}
u^{AB} = u^A \otimes u^B.
\label{eq:unitAB}
\end{equation}
Normalised bipartite states $\omega^{AB} \in \mathcal{S}^{AB}$ satisfy
\begin{equation}
(u^{AB})^{\chg{\top}} \omega^{AB} = 1.
\label{eq:normAB_state}
\end{equation}

\paragraph{Measurements and joint probabilities.}
A bipartite measurement is a set $M^{AB} = \{e^{AB}_k\}$ with $\sum_k e^{AB}_k = u^{AB}$.
The probability of  obtaining  {outcome} $e_k^{AB}$ on the bipartite state $\omega^{AB}$
is given by
\begin{equation}
P(e_k^{AB}|\omega^{AB}) =(e_k^{AB})^{\chg{\top}} \omega^{AB}.
\label{eq:jointprob}
\end{equation}

If $M^A = \{e^A_i\}$ and $M^B = \{e^B_j\}$ are
measurements on the two subsystems satisfying
$\sum_i e^A_i = u^A$ and $\sum_j e^B_j = u^B$,
then the product measurement $M=\{e^A_i \otimes e^B_j\}$
is normalised
$
\sum_{i,j} e^A_i \otimes e^B_j = u^{AB}
$
and hence a valid measurement.

\paragraph{Local tomography}
Local tomography mentioned above is a statement about composites, distinct from the tomographic completeness of a single-system effect set. If two joint states give the same probabilities for every product effect $e_k^A\otimes e_l^B$, where the local effects range over tomographically complete sets, then the two joint states are equal.

\paragraph{Partial trace (marginalisation).}
The marginal state of subsystem $A$ is obtained
by the action of the unit effect on subsystem $B$:
\begin{equation}
\omega^A = \bigl(\operatorname{id}_{A}\otimes (u^B)^\top\bigr)\omega^{AB},
\label{eq:partialtraceA}
\end{equation}
and similarly,
\begin{equation}
\omega^B = \bigl((u^A)^\top\otimes \operatorname{id}_{B}\bigr)\omega^{AB}.
\label{eq:partialtraceB}
\end{equation}
The marginals thus have the correct normalisation:
\[
(u^A)^{\chg{\top}} \omega^A = (u^B)^{\chg{\top}} \omega^B = 1.
\]   

These operations play the role of the \emph{partial trace} in quantum theory. A consistent composite must satisfy $\omega^A\in\mathcal S^A$ and $\omega^B\in\mathcal S^B$. More generally, applying any physical effect on one subsystem of a bipartite state must produce a subnormalised state in the cone of the other subsystem.

\subsection{Notation}
\label{subsec:Notation}

  It is convenient to use a coordinate system in which $u$ is the first unit vector and the other  {orthonormal} basis  {vectors} are denoted $\phi_\mu$, $\mu=1,\ldots, d$:
\begin{equation}
u^{\chg{\top}} u = 1,\quad
u^{\chg{\top}} \phi_\mu = \phi_\mu^{\chg{\top}} u = 0,\quad
\phi_\mu^{\chg{\top}} \phi_{\mu'} = \delta_{\mu\mu'},
\label{eq:u1}
\end{equation}
\begin{equation}
u^{\chg{\top}} = (1,0,\dots,0)  { =(1,0_{1\times d})}.
\end{equation}
States $\omega \in \mathcal{S}$ and effects $e \in \mathcal{E}$ are written as
\begin{equation}
\omega = \begin{pmatrix} 1 \\ v \end{pmatrix},
\qquad
e = p \begin{pmatrix} 1 \\ f \end{pmatrix},
\quad
p \in \mathbb{R},\ v,f \in \mathbb{R}^{d}.
\label{eq:statesmu}
\end{equation}
The first component encodes normalisation; the second is the $d$-dimensional “Bloch” vector. If the completely mixed state
\begin{equation}
\omega_0 = \begin{pmatrix} 1 \\  {0_{d\times 1}} \end{pmatrix}
\end{equation}
belongs to $\mathcal{S}$, then $p \ge 0$ in \eqref{eq:statesmu}, and the positivity condition \eqref{eq:positivity} becomes $1 + f^{\chg{\top}} v \ge 0$.

In the coordinates \eqref{eq:u1}, a transformation has block form
\begin{equation}
T = \begin{pmatrix}
1 &  {\boldsymbol{0}_{1\times d}} \\
s & R
\end{pmatrix},
\qquad s \in \mathbb{R}^{d},\ R \in \mathbb{R}^{d \times d}.
\label{eq:transformation_block}
\end{equation}
Here $R$ acts linearly on the Bloch vector and $s$ is an affine shift. For vanishing affine shifts,  {composition corresponds} to multiplying the $R$ blocks:
\begin{align}
& T_1=\begin{pmatrix}1& {\boldsymbol{0}_{1\times d}}\\[2pt] {\boldsymbol{0}_{d\times 1}}&R_1\end{pmatrix},\ 
T_2=\begin{pmatrix}1& {\boldsymbol{0}_{1\times d}}\\[2pt] {\boldsymbol{0}_{d\times 1}}&R_2\end{pmatrix} \nonumber\\
& \Longrightarrow\ 
T_2 T_1=\begin{pmatrix}1& {\boldsymbol{0}_{1\times d}}\\[2pt] {\boldsymbol{0}_{d\times 1}}&R_2 R_1\end{pmatrix}.
\label{Eq:ProdT}
\end{align}
With vanishing affine shifts, the Bloch block of an iterated
transformation is therefore given by a product of the matrices $R_i$.
This is the bridge to the matrix-product problems discussed below.

\subsection{Teleportation}
\label{subsec:Teleportation}

The analogue of teleportation (remote state preparation) in GPTs is discussed  in \cite{BBLW12}. Consider systems $A,B,C$, a state $\omega^A$ of $A$, a bipartite state $\omega^{CB}$ of $CB$, and a measurement $M^{BA}=\{e^{BA}_k\}$ acting jointly on $BA$. The initial state is $\omega^{CB} \otimes \omega^A$. Upon measuring $M^{BA}$, the (unnormalised) state of $C$ conditional  on outcome  {$k$} is
\begin{equation}
\omega^C_{\mid k}
= \bigl(\operatorname{id}_{C} \otimes (e^{BA}_k)^{\chg{\top}} \bigr)\, \bigl(\omega^{CB} \otimes \omega^A\bigr).
\label{eq:telep_cond}
\end{equation}

Eq. \eqref{eq:telep_cond} takes a particularly simple form when the affine shifts vanish. Assume that $\omega^A$, $\omega^{CB}$ and $e^{BA}_k$ take the form
\begin{align}
\omega^A &= u^A + \sum_{\mu} v^A_\mu\, \phi_\mu^A, \nonumber\\
e^{BA}_k &= p_k \!\left( u^B u^A + \sum_{\nu,\mu} H^{BA}_{\nu\mu;k}\, \phi_\nu^B \phi_\mu^A \right), \nonumber\\
\omega^{CB} &= u^C u^B + \sum_{\gamma,\nu} \Omega^{CB}_{\gamma\nu}\, \phi_\gamma^C \phi_\nu^B .
\end{align}
Here we impose the convenient  restriction that both
bipartite tensors have no one-body terms.  Thus $e_k^{BA}$ has no
components along $u^B\phi_\mu^A$ or $\phi_\nu^B u^A$, while
$\omega^{CB}$ has no components along $u^C\phi_\nu^B$ or
$\phi_\gamma^C u^B$.  Under this restriction the conditional map has
constant outcome weight and vanishing affine shift. We then have
\begin{equation}
\omega^C_{\mid k} 
= p_k \left( u^C + \sum_{\gamma} \Biggl( \sum_{\mu,\nu} \Omega^{CB}_{\gamma\nu}\, H^{BA}_{\nu\mu;k}\, v^A_\mu \Biggr) \phi_{\gamma}^C \right) ,
\end{equation}
which corresponds to an (unnormalised) transformation from $A$ to $C$ with vanishing affine  {shift} given by the product of matrices $\Omega^{CB}$ and 
$H_{; {k}}^{BA}$:
\begin{equation}
T_{\mid k} 
= p_k \begin{pmatrix}
1 &  {\boldsymbol{0}_{1\times d}} \\
 {\boldsymbol{0}_{d\times 1}} & \ \Omega^{CB} H^{BA}_{;k}
\end{pmatrix}.
\label{eq:telep_channel}
\end{equation}
Thus, once the outcome $k$ is fixed, the conditional subnormalised
map acts on Bloch vectors as
$v^A\mapsto\Omega^{CB}H^{BA}_{;k}v^A$.  Its normalisation is the
positive factor $p_k$.  Concatenating such teleportation steps therefore
multiplies the corresponding  matrices.

\subsection{Elementary systems}
\label{subsec:ElemSys}

In the case of multipartite systems it is useful to make a distinction between
specific states and effects and system types,  see \cite{B07} for a discussion.

A system of type $A$ has certain properties, such as dimensionality {and sets of single-party states, effects, and transformations,} which are the same for all systems of type $A$. We can have multiple states of systems of type $A$, which we will denote $\omega^{A^{(1)}}, \omega^{A^{(2)}},\ldots$. 
Similarly we can have systems of other types, called $B$, $C$, etc. 

In quantum mechanics, a system type corresponds to a class of systems with  the same physical properties. For instance, type $A$ could correspond to a photon whose polarisation can be measured and manipulated in certain ways, and type $B$ to an electron whose position can be measured and manipulated in certain ways. One can have multiple (distinguishable) photons of type $A$, and multiple (distinguishable) electrons of type $B$.

By using system types, we can have a succinct description of  {arbitrarily} many systems. These systems can be coupled together through multipartite effects and states, as in teleportation.

\subsection{Hypersphere theory and interior balls}

\subsubsection{Hypersphere theory}\label{ToyExample}

 {The undecidability reductions below require GPTs whose state and effect spaces contain neighbourhoods of the completely mixed state and the half-unit effect, respectively, and whose positivity conditions can be verified directly. We adapt for this purpose a well-known single-system GPT, sometimes called hypersphere theory, whose Bloch-vector sets are Euclidean balls; see, for instance, Refs.~\cite{BMU14,LPW18}.}

For $r>0$, let
\[
B_r^{d} := \{x \in \mathbb{R}^{d} : \lVert x \rVert \le r\}
\]
be the radius-$r$ ball in $\mathbb{R}^d$. For $p,r>0$,  denote  the  closed ball of radius $pr$ translated by $p$ along the first coordinate 
\begin{equation}
\Sigma_{p,r} := \left\{\, p \begin{pmatrix} 1 \\ x \end{pmatrix} \in \mathbb{R}^{d+1} : x \in B_r^{d} \,\right\} .
\end{equation}

\begin{lemma}[Hypersphere Theory]
\label{lem:hypersphere-theory}
Take
\begin{align}
\mathcal{S}_\epsilon
&=\Sigma_{1,\epsilon},
\label{eq:genSepsilon}\\
\mathcal{E}_{\epsilon'}
&=
\operatorname{conv}
\left(
\Sigma_{1/2,\epsilon'}\cup\{0,u\}
\right),
\label{eq:genEepsilon}
\end{align}
where $\epsilon,\epsilon'>0$ satisfy
$\epsilon\epsilon'\leq1$.
The pair
$(\mathcal{S}_\epsilon,\mathcal{E}_{\epsilon'})$ is a
single-system GPT in the sense of Sec.~\ref{subsec:SingleSys}.
\end{lemma}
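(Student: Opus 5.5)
We must verify the axioms of Sec.~\ref{subsec:SingleSys}. Take $C(\mathcal S)=\mathrm{cone}(\mathcal S_\epsilon)=\{t(1,x)^\top: t\ge 0,\ \|x\|\le\epsilon\}$ and $C(\mathcal E)=\mathrm{cone}(\mathcal E_{\epsilon'})=\{t(1,f)^\top: t\ge0,\ \|f\|\le\epsilon'\}$. The second identity holds because $\Sigma_{1/2,\epsilon'}$, $0$ and $u=(1,0)^\top$ all lie in this ice-cream cone, while $\Sigma_{1/2,\epsilon'}$ already generates it.

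The proof has four steps.

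\emph{Step 1: the cones are proper.} Both are Lorentz-type cones $\{(t,y):\|y\|\le r t\}$ with $r=\epsilon$ or $\epsilon'$. They are convex by the triangle inequality and closed as preimages of closed sets. They are pointed, since $(t,y)$ and $-(t,y)$ both in the cone force $t=0$ and then $y=0$. They are generating, since they contain the open set $\{\|y\|<rt\}$.

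\emph{Step 2: positivity.} For $\omega=(1,v)^\top$ with $\|v\|\le\epsilon$ and $e=p(1,f)^\top$ with $p\ge0$, $\|f\|\le\epsilon'$, Cauchy--Schwarz gives
\[
e^\top\omega=p(1+f^\top v)\ge p(1-\epsilon\epsilon')\ge0 .
\]
By homogeneity this yields $C(\mathcal S)\subseteq C(\mathcal E)^*$ and $C(\mathcal E)\subseteq C(\mathcal S)^*$. Normalisation $u^\top\omega=1$ is immediate, and $u\in\mathcal E_{\epsilon'}$ by construction.

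\emph{Step 3: $\mathcal E_{\epsilon'}$ is the order interval.} We show $\mathcal E_{\epsilon'}=C(\mathcal E)\cap(u-C(\mathcal E))$, as required by Eq.~\eqref{eq:physical_effects}. I expect this to be the main obstacle.

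For ``$\subseteq$'': the right-hand side is convex, contains $0$ and $u$, and contains each point $\tfrac12(1,x)$ with $\|x\|\le\epsilon'$, because $u-\tfrac12(1,x)=\tfrac12(1,-x)\in C(\mathcal E)$.

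For ``$\supseteq$'': take $(t,y)$ with $\|y\|\le\epsilon' t$ and $\|y\|\le\epsilon'(1-t)$, so that $0\le t\le1$ and $\|y\|\le\epsilon'\min(t,1-t)$.
\begin{itemize}
\item If $t\le\tfrac12$, write $(t,y)=2t\cdot\tfrac12(1,y/t)+(1-2t)\cdot0$. Here $\|y/t\|\le\epsilon'$, so this is a convex combination of a point of $\Sigma_{1/2,\epsilon'}$ and $0$.
\item If $t\ge\tfrac12$, write $(t,y)=(2-2t)\cdot\tfrac12\bigl(1,y/(1-t)\bigr)+(2t-1)u$, which is a convex combination of a point of $\Sigma_{1/2,\epsilon'}$ and $u$.
\end{itemize}
The case $t=0$ forces $y=0$. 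This also gives complementation and coarse-graining closure, as discussed after Eq.~\eqref{eq:physical_effects}.

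\emph{Step 4: full dimensionality.} $\mathcal S_\epsilon$ affinely spans the hyperplane $u^\top\omega=1$, since it contains a $d$-ball there. $\mathcal E_{\epsilon'}$ spans $\mathbb R^{d+1}$, since it contains a full-dimensional neighbourhood of $\tfrac12 u$. This completes the verification that $(\mathcal S_\epsilon,\mathcal E_{\epsilon'})$ is a single-system GPT.
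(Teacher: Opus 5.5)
Your proof is correct and follows the same route as the paper: you identify both conic hulls as Lorentz-type cones, prove positivity by Cauchy--Schwarz, and establish the order-interval identity $\mathcal E_{\epsilon'}=C(\mathcal E_{\epsilon'})\cap\bigl(u-C(\mathcal E_{\epsilon'})\bigr)$. Your explicit convex decompositions justify the description $\lVert f\rVert\le\epsilon'\min\{p,1-p\}$, which the paper only asserts; the one case you leave implicit is $t=1$ in the ``$\supseteq$'' step, where $y=0$ is again forced and the point is $u$ itself.
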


\begin{proof}
Every state has the form
\[
\omega=
\begin{pmatrix}1\\v\end{pmatrix},
\qquad
\lVert v\rVert\leq\epsilon,
\]
and is therefore normalised.  The set is compact and convex, is
full-dimensional in the normalisation hyperplane, and contains the
completely mixed state
$\omega_0=(1,\boldsymbol{0}_{d\times1})^{\top}$.
The conic hull is the set obtained by allowing non-negative multiples and sums. Here it is the proper cone:
\[
C(\mathcal{S}_\epsilon)
=
\left\{
\begin{pmatrix}t\\v\end{pmatrix}:
t\geq0,\ \lVert v\rVert\leq\epsilon t
\right\}.
\]

Let us check that the set $\mathcal{E}_{\epsilon'}$ is a valid effect space, i.e. contains $0$ and
$u$, is convex, full-dimensional, and closed under complementation. The last property is the only one that needs explicit calculation.
To this end note that we
have the equivalent description
\begin{equation}
\mathcal{E}_{\epsilon'}
=
\left\{
\begin{pmatrix}p\\f\end{pmatrix}:
0\leq p\leq1,\ 
\lVert f\rVert\leq
\min\{\epsilon' p,\epsilon'(1-p)\}
\right\}
\label{eq:hypersphere_effect_interval}
\end{equation}
from which it follows that
\[
C(\mathcal{E}_{\epsilon'})
=
\left\{
\begin{pmatrix}p\\f\end{pmatrix}:
p\geq0,\ \lVert f\rVert\leq\epsilon'p
\right\}.
\]
and therefore 
\[
\mathcal{E}_{\epsilon'} =
C(\mathcal{E}_{\epsilon'})
\cap
\bigl(u-C(\mathcal{E}_{\epsilon'})\bigr).
\]

It remains to verify positivity.  For
\(
\omega=(1,v)^{\top}\in\mathcal{S}_\epsilon
\)
and
\(
e=(p,f)^{\top}\in\mathcal{E}_{\epsilon'}
\),
Eq.~\eqref{eq:hypersphere_effect_interval} and the Cauchy--Schwarz
inequality give
\[
e^{\top}\omega
=p+f^{\top}v
\geq
p-\lVert f\rVert\lVert v\rVert
\geq
p(1-\epsilon\epsilon')
\geq0.
\]
Since $u-e\in\mathcal{E}_{\epsilon'}$, applying the same inequality to
$u-e$ also gives
\[
e^{\top}\omega
\leq u^{\top}\omega=1.
\]
Hence all outcome probabilities lie in $[0,1]$, and their sum is $1$
for every measurement.  This proves the consistency of the
hypersphere theory.
\end{proof}

\subsubsection{Interior Balls}

The following lemma records the geometric fact needed in the
reductions.  Every full-dimensional GPT contains a small ball of
normalised states about some interior state and a small ball of effects
about $u/2$.  If the state space is centrally symmetric under
$v\mapsto-v$, the state ball can be centred at the completely mixed
state.

This is what lets the reductions work in both directions: bounded
products admit small ball models, whereas in the unbounded case every
full-dimensional GPT already contains test balls that eventually
detect a sufficiently large product.

\begin{lemma}[Interior Balls]
\label{lem:hypersphere-Balls}
Let $(\mathcal{S},\mathcal{E})$ be any single-party GPT on $\mathbb{R}^{d+1}$ whose state and effect cones are full-dimensional. Choose coordinates in which
$u=(1,\boldsymbol{0}_{1\times d})^\top$ and normalised states are
$\omega(v)=(1,v)^\top$.
The following two statements hold.
\begin{enumerate}
\item[(i)] There exist
$v_\ast\in\mathbb{R}^d$ and $\epsilon,\epsilon'>0$ such that
\begin{align}
\left\{
\begin{pmatrix}1\\v_\ast+x\end{pmatrix}:
\lVert x\rVert\leq\epsilon
\right\}
&\subseteq\mathcal{S},
\label{eq:state-open-set}\\
\Sigma_{1/2,\epsilon'}
&\subseteq\mathcal{E}.
\label{eq:effect-open-set}
\end{align}
\item[(ii)] 
Suppose that for any normalised state $\omega(v)=(1,v)^\top \in\mathcal{S}$, the state $\omega^-(v)=(1,-v)$ also belongs to $\mathcal{S}$. Equivalently, suppose that
\[
J=
\begin{pmatrix}
1&\boldsymbol{0}_{1\times d}\\
\boldsymbol{0}_{d\times1}&-\mathbb{I}
\end{pmatrix}
\]
is a transformation that preserves the state space.
Then we can take $v_*=0$ in Eq. \eqref{eq:state-open-set}, or equivalently
there is $\epsilon >0$ such that
\[
\Sigma_{1,\epsilon}\subseteq\mathcal{S}.
\]
\end{enumerate}
\end{lemma}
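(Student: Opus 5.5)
Part (i) is a full-dimensionality argument and part (ii) uses convexity together with the symmetry $J$.

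\emph{Step 1: state ball.} The normalised state set $\mathcal S$ is convex. By the standing assumption of Sec.~\ref{subsec:SingleSys}, it has $d$ affinely independent directions inside the hyperplane $u^\top\omega=1$. Choose $d+1$ affinely independent states $\omega(v_0),\dots,\omega(v_d)\in\mathcal S$. Their convex hull is a nondegenerate $d$-simplex in $\mathbb R^d$ and lies in $\mathcal S$. Take $v_\ast=\frac1{d+1}\sum_i v_i$, the barycentre. A nondegenerate simplex contains a Euclidean ball of some radius $\epsilon>0$ about its barycentre (for instance, half the minimal distance from the barycentre to a facet). This gives Eq.~\eqref{eq:state-open-set}.

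\emph{Step 2: effect ball.} This is the more delicate step, since we need a ball centred exactly at $u/2$ rather than at some arbitrary interior point. The cone $C(\mathcal E)$ is full-dimensional, so it has nonempty interior. Pick an interior point $e_\ast$ of $C(\mathcal E)$ with a ball of radius $\rho$ around it in $C(\mathcal E)$. Next, show that $u$ is itself an interior point of $C(\mathcal E)$.

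\begin{itemize}
\item Positivity gives $e_\ast^\top\omega\ge0$ on states, and by the scaling freedom of the cone we may replace $e_\ast$ by $\lambda e_\ast$.
\item Since states are bounded (they lie in $C(\mathcal E)^*$, which is pointed because $C(\mathcal E)$ is generating), $\sup_{\omega\in\mathcal S}e_\ast^\top\omega<\infty$.
\item The cleaner route is to use that $u$ and $e_\ast$ both lie in $C(\mathcal E)$ together with the order-interval definition, Eq.~\eqref{eq:physical_effects}. For small $t>0$, both $te_\ast$ and $u-te_\ast$ should be checked to lie in $C(\mathcal E)$.
\end{itemize}

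Membership of $u-te_\ast$ is where I expect difficulty. It is not automatic from the axioms unless $u$ is interior to $C(\mathcal E)$. Here I would argue as follows. The physical effect set $\mathcal E$ spans $\mathbb R^{d+1}$ and is closed under $e\mapsto u-e$, so it is symmetric about $u/2$. A convex, full-dimensional set symmetric about $u/2$ has $u/2$ as an interior point, because its centre of symmetry lies in the relative interior: average any interior point $e$ with its reflection $u-e$. Hence a ball of radius $\epsilon'$ about $u/2$ lies in $\mathcal E$. Intersecting that ball with the hyperplane $p=1/2$ gives Eq.~\eqref{eq:effect-open-set}, after rescaling $\epsilon'$ since the points of $\Sigma_{1/2,\epsilon'}$ are $\tfrac12(1,x)$ with $\lVert x\rVert\le\epsilon'$. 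The symmetry argument thus sidesteps the interiority question entirely.

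\emph{Part (ii).} Use Step 1 to get a ball of radius $\epsilon$ about $v_\ast$. By the hypothesis on $J$, the ball of radius $\epsilon$ about $-v_\ast$ is also in $\mathcal S$. For $\lVert x\rVert\le\epsilon$, convexity gives
\[
\tfrac12\omega(v_\ast+x)+\tfrac12\omega(-v_\ast+x)=\omega(x)\in\mathcal S.
\]
Hence $\Sigma_{1,\epsilon}\subseteq\mathcal S$.
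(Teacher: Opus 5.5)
Your proof is correct and follows the same route as the paper. It uses a ball about a relative-interior state (your simplex barycentre makes this explicit), the reflection $e\mapsto u-e$ plus convexity to show $u/2$ is interior to $\mathcal E$, and symmetry plus convexity for part~(ii), where your identity $\tfrac12\omega(v_\ast+x)+\tfrac12\omega(-v_\ast+x)=\omega(x)$ spells out what the paper only asserts. You can delete the exploratory bullets about interior points of $C(\mathcal E)$: boundedness of the states does not follow from pointedness of $C(\mathcal E)^*$ alone (it needs $u$ interior to $C(\mathcal E)$), but your final argument never uses it.
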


\begin{proof}
For part~(i),
 the normalised state space has nonempty relative
interior in the affine hyperplane $u^\top\omega=1$.  A sufficiently
small closed ball about a relative-interior state
$\omega(v_\ast)$ gives Eq.~\eqref{eq:state-open-set}.  Similarly,
$\mathcal{E}$ is full-dimensional, convex, and invariant under
$e\mapsto u-e$.  If $e_0$ is an interior effect, then $u-e_0$ is
also an interior effect; by convexity, their midpoint $u/2$ is an
interior effect.  Hence $\mathcal{E}$ contains a sufficiently small
ball about $u/2$, which gives Eq.~\eqref{eq:effect-open-set}.

For part~(ii), the hypothesis implies that the Bloch-vector section of $\mathcal{S}$ is a convex set of dimension $d$ symmetric about the completely mixed state $\omega_0=(1,0_{1\times d})^\top$. Hence it
contains a Euclidean ball about
the origin.
\end{proof}

\section{Undecidable problems concerning matrix products}
\label{sec:undecidable}

We now recall two undecidable questions about products of matrices. They will be the engines of all four reductions in the paper. Given matrices $\{M_i\in\mathbb{R}^{d\times d}\}_{i=1}^k$, a word $w=i_1\ldots i_L$ tells us which matrices to apply and in which order, and defines the product 
\begin{equation}
M_w=M_{i_L}\cdots M_{i_1}. 
\end{equation}
Although the starting list is finite, words can have arbitrary length. 
The set of all such products $\{ M_w \}$ is called 
the semigroup generated by the matrices. In our GPT constructions, such products arise either by composing transformations or by repeating teleportation; see Eqs.~\eqref{Eq:ProdT} and \eqref{eq:telep_channel}.

 Throughout this section, probability distributions are represented by column vectors. A stochastic matrix is therefore a matrix with nonnegative entries whose columns sum to one, so that it maps probability distributions to probability distributions.

{\bf Acceptance probability of probabilistic automata}.
A probabilistic automaton is a finite-state machine whose transitions are probabilistic. Reading a letter applies one stochastic matrix; reading a word applies their product. The question below asks whether some word is accepted with probability above a fixed threshold.
Let $\{M_i\in [0,1]^{d\times d}\}$ with $i\in\{1,...,k\}$  be a set of $k$ stochastic matrices.
Let $q \in [0,1]^d$ be a probability distribution (i.e. $\sum_i q_i = 1$).
Let $F \in \{0,1\}^d$ be a $d$-dimensional vector with $0,1$ entries. 
Let $w=i_1\ldots, i_L \in\{1,...,k\}^L$ be a word of length $L$.
The acceptance probability of word $w$ on input $q$ is
$P(F\vert w , q) =F^{{\top}} M_w q$.

There are many questions concerning the acceptance probabilities of probabilistic automata which are undecidable, see e.g. \cite{rabin1963,paz1971,B74,BMT77,blondel2003,H07,gimbert2010,Rote24}. We will use the following result:

\begin{theorem}\label{thm:emptiness}
{\bf Strict emptiness is undecidable.}
Given $\{ M_i \in \mathbb{Q}^{d\times d}\}$, $q \in \mathbb{Q}^{d}$, $F \in \{0,1\}^d$ as defined above, and 
$ \lambda \in  \mathbb{Q} $  with $0 < \lambda <1$, 
it is undecidable whether there exists a word $w$ such that $P(F\vert w , q) > \lambda$, or whether $P(F\vert w , q) \leq  \lambda$ for all words $w$.
\end{theorem}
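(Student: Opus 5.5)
Theorem~\ref{thm:emptiness} is a classical result from the theory of probabilistic automata, so the plan is to obtain it by reduction from a known undecidable problem rather than prove it from scratch. The standard route, following Paz and the later simplifications of Condon--Lipton, Blondel--Canterini and Gimbert--Oualhadj, is to reduce from Post's Correspondence Problem (PCP) or, equivalently, from the emptiness problem for two-counter machines. I would first show undecidability of the question ``is there a word $w$ with $P(F\vert w,q) > \lambda$'' for the fixed threshold $\lambda=1/2$. Then I would rescale: replacing $M_i$ by convex combinations with a fixed absorbing accepting or rejecting state, together with a modified initial distribution, moves any rational threshold in $(0,1)$ to any other. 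This gives the statement for arbitrary rational $\lambda$.

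The core step is an encoding of a PCP instance $\{(u_j,v_j)\}$ over a binary alphabet. A binary string $s$ is represented by the rational number $0.s$ in base $2$ (or base $3$, to avoid ambiguities in the expansions). Appending a letter corresponds to an affine map $x\mapsto (x+b)/2$, which is realisable by a $2\times 2$ stochastic matrix acting on the distribution $(x,1-x)^\top$. Reading the letter $j$ simultaneously updates the encodings of $u$ and $v$, using a product of two such small automata. A final ``check'' letter produces an acceptance probability of the form
\[
\tfrac12 - c\,(x_u-x_v)^2
\]
plus corrections. This is achieved by taking a suitable combination of acceptance probabilities of two independent copies of the run, which is the trick giving products $x_u x_v$, $x_u^2$ and $x_v^2$ from a linear automaton via tensor products of stochastic matrices. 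The construction is arranged so that the probability reaches $1/2$ exactly iff $x_u=x_v$, that is, iff the PCP has a solution.

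I expect the main obstacle to be the passage from the non-strict version (does some word reach $\geq 1/2$) to the strict version with threshold $>\lambda$ that Theorem~\ref{thm:emptiness} requires. Exact equality is a non-strict condition, so the PCP encoding naturally yields undecidability of non-strict emptiness. For the strict version I would instead reduce from the complement-type problem that is undecidable in the other direction, namely halting of a deterministic machine. Gimbert--Oualhadj's value-$1$ construction, or Condon--Lipton's bounded-error simulation of a two-counter machine, gives an automaton whose words exceed the threshold strictly exactly when the machine halts.

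Since all matrices produced in these constructions are rational, stochastic, and of fixed dimension computable from the instance, the reduction is effective. Undecidability of the halting/PCP problem then transfers to Theorem~\ref{thm:emptiness}. In the paper it is legitimate simply to cite \cite{paz1971,blondel2003,gimbert2010} for this; the plan above is how I would reconstruct it.
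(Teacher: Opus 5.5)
The paper gives no proof of Theorem~\ref{thm:emptiness}. It imports the result from the literature (\cite{paz1971,blondel2003,gimbert2010}, with \cite{Rote24} cited for the small instances, with $q$ and $F$ distinct standard basis vectors, on which the later reductions rely). Your decision to cite is therefore exactly what the paper does. Your reconstruction, however, has a genuine gap at the step you yourself flag, and the diagnosis behind it is mistaken. There is no ``direction'' asymmetry between the strict and non-strict problems. For rational data each $P(F\vert w,q)$ is an exactly computable rational, so both $\exists w:\,P>\lambda$ and $\exists w:\,P\geq\lambda$ are recursively enumerable. PCP and the halting problem are interreducible, so switching the source problem buys nothing. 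Nor can the Gimbert--Oualhadj value-$1$ construction supply the strict version. It takes a strict-emptiness instance as input: the new automaton has value $1$ iff some word is accepted by the old one with probability $>\tfrac12$. Invoking it here is circular. As written, your strict step is at best a citation of a stronger theorem, not a reconstruction.

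The missing idea is an elementary gap argument that applies directly to your PCP automaton. Let $D$ be a common denominator of the entries of $q$ and of the $M_i$. Then $P(w)=F^\top M_wq$ has denominator dividing $D^{|w|+1}$, so $P(w)<\tfrac12$ forces $P(w)\leq\tfrac12-\tfrac12D^{-|w|-1}$. Average your automaton, with equal weights, with a helper that accepts with probability $\tfrac12+\tfrac12D^{-|w|-1}$. The helper starts in an absorbing accepting state with probability $\tfrac12$, in an accepting state $a$ with probability $\tfrac1{2D}$, and in an absorbing rejecting state otherwise. Every letter keeps $a$ in place with probability $1/D$ and otherwise sends it to the rejecting state. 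The averaged automaton has $P'(w)=\tfrac12P(w)+\tfrac14+\tfrac14D^{-|w|-1}$, hence $P'(w)>\tfrac12$ iff $P(w)\geq\tfrac12$.

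Apply this to your quadratic encoding $\tfrac12-c(x_u-x_v)^2$, using base $3$ with digits $\{1,2\}$, which makes the encoding injective. This yields strict emptiness at $\lambda=\tfrac12$, provided you first push the empty word strictly below $\tfrac12$, since for it $x_u=x_v$ trivially. For instance, mix in an automaton that accepts every nonempty word with probability $\tfrac12$ and the empty word with probability $0$.

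Two minor points. First, $x\mapsto(x+b)/2$ prepends rather than appends a digit; this only reverses the index sequence and is harmless for PCP. Second, no rescaling of $\lambda$ is needed, because $\lambda$ is part of the input and hardness at $\lambda=\tfrac12$ already proves the theorem.
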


The two alternatives in this theorem are separated by the strict inequality: either at least one finite word crosses the cut point $\lambda$, or no word does. There is no algorithm which always decides which alternative holds.

Reference \cite{Rote24} gives the most recent results. In particular, it shows that Theorem \ref{thm:emptiness} holds for $k=2$ matrices, dimension $d=20$, and cut point $\lambda=1/20$; and for $k=5$ matrices, dimension $d=9$, and cut point $\lambda=1/9$. In both cases the initial and accepting vectors can be chosen to be distinct standard basis vectors, i.e.  we may take $q=(1,0,\ldots,0)^\top$ and $F=(0,1,0,\ldots,0)^\top$.

We will also  use results on the unboundedness of products of matrices \cite{BT00}. We denote by $\| M \| $ the spectral norm:
\begin{eqnarray}
\| M \| &=& \text{largest singular value of $M$}\nonumber\\
&=& \max_x \frac{\| M  x\| }{\| x\| } \nonumber\\
&=& \max_{n,n' : \| n\| = \| n'\| =1}  n'^{\chg{\top}} M  n
\label{eq:spectralnorm}
\end{eqnarray}
where $x,n,n'$ are  {$d$-dimensional} vectors, and we use the  {Euclidean} norm for vectors.

\begin{theorem}\label{thm:unboundedness}
{\bf Unboundedness of products of matrices is undecidable.}
Given a set $\{ M_i \in \mathbb{Q}^{d\times d}\}$ with $i\in\{1,...,k\}$  of matrices, 
it is undecidable whether the products $M_{w} $ stay bounded for all words $w$, or whether the products can become unbounded.
That is, it is undecidable whether there exists $\Lambda >0$ such that for all words $w$, $\| M_{w}  \| \leq \Lambda$; or whether, for all $\Lambda>0$, there exists a word $w$ such that $\| M_w  \| > \Lambda$.
\end{theorem}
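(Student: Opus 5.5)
We reduce strict emptiness (Theorem~\ref{thm:emptiness}) to unboundedness. The idea is to add a single rank-one "reset" matrix. This matrix converts one acceptance probability of the automaton into a multiplicative factor, so products grow exactly when some factor exceeds one. Since an algorithm for unboundedness would then decide strict emptiness, Theorem~\ref{thm:unboundedness} follows. (Alternatively one may simply invoke \cite{BT00}; the reduction below is a self-contained route.)

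\emph{Construction.} Let an instance $(\{M_i\}_{i=1}^k, q, F, \lambda)$ of strict emptiness be given. Define the rational matrix
\[
R=\lambda^{-1}\, q F^{\top},
\]
and consider the finite family $\{M_1,\dots,M_k,R\}$. Every word over this enlarged alphabet factors as
\[
M_{w_n} R M_{w_{n-1}} R \cdots R M_{w_0},
\]
where the $w_j$ are (possibly empty) words in the original letters. Using $R=\lambda^{-1}qF^\top$, such a product collapses to
\[
M_{w_n}\, q \left(\prod_{j=1}^{n-1} \frac{F^\top M_{w_j} q}{\lambda}\right) \frac{F^\top M_{w_0}}{\lambda}
\qquad (n\ge1),
\]
and to $M_{w_0}$ itself when $n=0$.

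\emph{Bounded case.} Suppose $P(F\mid w,q)\le\lambda$ for all words $w$. Then every scalar factor in the middle product lies in $[0,1]$. Column-stochastic matrices have induced $\ell_1$-norm equal to $1$, so the remaining factors $M_{w_n}q$, $F^\top M_{w_0}/\lambda$ and $M_{w_0}$ are bounded uniformly in the words, by constants depending only on $d$, $\lambda$ and the norms of $q$ and $F$. Since all norms on $\mathbb{R}^{d\times d}$ are equivalent, the spectral norms of all products are uniformly bounded.

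\emph{Unbounded case.} Suppose instead that some word $w$ satisfies $F^\top M_w q=\lambda(1+\delta)$ with $\delta>0$. Then
\[
(R M_w)^n = (1+\delta)^{n-1}\,\lambda^{-1}\, q F^\top M_w .
\]
This matrix is nonzero: $F^\top M_w q>0$ forces $F^\top M_w\neq 0$, and $q\neq0$. Hence its norm grows like $(1+\delta)^{n}$, and the products are unbounded.

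Thus the products of $\{M_1,\dots,M_k,R\}$ are bounded if and only if the automaton instance is strictly empty. The only point needing care is the uniform bound in the bounded case. It hinges on the middle factors being at most $1$ precisely under the non-strict inequality $\le\lambda$, which is the complementary alternative in Theorem~\ref{thm:emptiness}. Because the boundary case $P=\lambda$ gives factor exactly $1$, the strict/non-strict split matches exactly.
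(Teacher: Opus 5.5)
Your reduction is correct and is essentially the one the paper relies on. The paper gives no proof of its own; it cites \cite{BT00}, which reduces strict emptiness to unboundedness by adjoining to the stochastic generators a single rank-one matrix of the form $qF^{\top}/\lambda$, keeping the dimension and adding exactly one matrix, as you do. The one convention you use implicitly is that the empty word counts in strict emptiness, since powers of $R$ contribute the factor $F^{\top}q/\lambda$; this is the standard convention, and it is harmless anyway for the instances with $F^{\top}q=0$ used in the paper.
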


Here ``bounded'' means that one common number $\Lambda$ bounds the norm of every finite product, regardless of its length. The theorem says that no algorithm can always decide whether such a common bound exists.

The proof in \cite{BT00} is based on showing that the strict emptiness problem is reducible to unboundedness of matrix products. The reduction does not change  {the} dimension of the matrices, but increases their number by one. Hence Theorem \ref{thm:unboundedness} holds for $3$ matrices 
of dimension $20$ and $6$ matrices of dimension $9$. As shown in \cite{BT00}, Theorem \ref{thm:unboundedness} also holds for $2$ matrices whose dimension can be taken to be $9\times6=54$. 

The rest of the paper transfers these two matrix questions to GPTs.

\section{Consistency of Single-System GPT is Undecidable}
\label{Sec:ConsistenceSingleGPT}

We begin with a single system and consider the consistency of discrete-time evolution. A finite list of candidate transformations is given and represents the operations that may be performed in one time step. Once the list is given, every finite sequence of these operations must also be admitted. 
We consider two versions of the consistency question. In the first, only the candidate transformations are specified and we ask whether some state and effect spaces can accommodate them. In the second, the state and effect spaces are specified as well, and we ask whether the three ingredients are mutually compatible.

Every instance has a finite rational description.  In particular,
the unit effect and the proposed transformation matrices have rational
entries.  Whenever state and effect spaces are part of the input, they
should have finite semialgebraic descriptions over the rationals.  The
examples below are simpler still: they are Euclidean balls of rational
radius, sometimes enlarged by adjoining finitely many rational points
and taking the convex hull.

\subsection{Definitions and consistency conditions}

Before compatibility with a state space has been established, we
call the proposed maps \emph{candidate transformations}.  A generating
set of candidate transformations on $\mathbb R^{d+1}$ is a finite set
\[
\mathcal T_0=\{T_i\}_{i=1}^K
\]
of linear maps satisfying $u^\top T_i=u^\top$.  This condition
guarantees preservation of normalisation, but not yet preservation of
any state space.

The set of transformations obtained from all finite sequences of generators (the generated semigroup) is
\begin{align}
\mathcal{T}_{\rm gen}
=&
\left\{
T_w=T_{k_L}\cdots T_{k_1}\right. \nonumber\\
&\left.
:
L\geq 0,\ 
w=k_1\ldots k_L\in\{1,\ldots,K\}^{L}
\right\},
\end{align}
where the empty word, $L=0$, gives the identity transformation. Since there are infinitely many words, the set $\mathcal{T}_{\rm gen}$ can have infinite cardinality.

A \emph{generating set for a single-system GPT} is a triple
$(\mathcal{S}_0,\mathcal{E}_0,\mathcal{T}_0)$, where
$(\mathcal{S}_0,\mathcal{E}_0)$ is a single-party GPT and
$\mathcal{T}_0$ is a generating set of candidate transformations.

Sequential application of the transformations on the states in $\mathcal{S}_0$ generates new states. The state space thus generated is
\begin{equation}
\mathcal{S}_{\rm gen}
=
 {\overline{\operatorname{conv}}}
\left\{
T_w\omega:
\omega\in\mathcal{S}_0,\ 
T_w\in\mathcal{T}_{\rm gen}
\right\}.
\end{equation}
Taking the  {closed} convex hull ensures that $\mathcal{S}_{\rm gen}$ includes probabilistic mixtures and their
limits, as required of a state space.
Thus $\mathcal S_{\rm gen}$ contains not only every state reached after finitely many steps, but also mixtures of such states and limiting states.

\emph{A generating set for a single-system GPT $(\mathcal{S}_0,\mathcal{E}_0,\mathcal{T}_0)$ is consistent} if
$(\mathcal{S}_{\rm gen},\mathcal{E}_0)$ is a single-party GPT.

Because $(\mathcal S_0,\mathcal E_0)$ is already a single-party
GPT and every candidate map preserves $u$, consistency is equivalent
to
\begin{equation}
e^{\top}T_w\omega\geq 0
\quad\text{for all }\omega\in\mathcal S_0,
\ e\in\mathcal E_0,\ T_w\in\mathcal T_{\rm gen}.
\label{eq:checkpositivity}
\end{equation}
To see this, note that
linearity and continuity extend this inequality to the closed convex
hull $\mathcal S_{\rm gen}$, while
$u^\top T_w=u^\top$ gives $u^\top\sigma=1$ for every
$\sigma\in\mathcal S_{\rm gen}$.

Furthermore we also have that $\mathcal S_{\rm gen}$ is compact.  This follows from the fact that the
full-dimensional physical effect space is tomographically complete
and invariant under $e\mapsto u-e$, hence it contains a neighbourhood of
$u/2$; see Lemma~\ref{lem:hypersphere-Balls}.  Positivity against the
effects in this neighbourhood bounds every Bloch component of a
generated state.  Thus $\mathcal S_{\rm gen}$ is bounded and, by
definition, closed.  It has affine dimension $d$ because it contains
$\mathcal S_0$.  Consequently its conic hull is a proper generating
cone.

A generating set of candidate transformations $\mathcal{T}_0$ is \emph{consistent} if there exist a normalised state space $\mathcal{S}_0$ of affine dimension $d$ and a full-dimensional physical effect space $\mathcal{E}_0$ such that $(\mathcal{S}_0,\mathcal{E}_0,\mathcal{T}_0)$ is consistent.

\subsection{Theorem \ref{Theo:undecidability1}: consistency of transformations}

\begin{theorem}
Given a fixed rational unit effect $u$ and a finite generating set of rational candidate transformations $\mathcal{T}_0=\{T_i\}_{i=1}^{K}$ satisfying $u^\top T_i=u^\top$, the problem of deciding whether $\mathcal{T}_0$ is consistent is undecidable.
\label{Theo:undecidability1}
\end{theorem}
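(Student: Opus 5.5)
Our plan is to reduce the unboundedness problem of Theorem~\ref{thm:unboundedness} to consistency of $\mathcal T_0$. Given rational matrices $\{M_i\}_{i=1}^k\subset\mathbb Q^{d\times d}$, we form the candidate transformations with vanishing affine shift
\[
T_i=\begin{pmatrix}1&\boldsymbol{0}_{1\times d}\\ \boldsymbol{0}_{d\times1}&M_i\end{pmatrix},\qquad i=1,\ldots,k,
\]
together with the rational unit effect $u=(1,\boldsymbol 0_{1\times d})^\top$. Then $u^\top T_i=u^\top$, and by Eq.~\eqref{Eq:ProdT} every $T_w$ has Bloch block $M_w$. We will show that $\mathcal T_0$ is consistent if and only if the products $M_w$ are bounded. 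A decision procedure for consistency would then decide boundedness, which Theorem~\ref{thm:unboundedness} rules out.

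\emph{Bounded $\Rightarrow$ consistent.} Suppose $\|M_w\|\le\Lambda$ for all words, with $\Lambda\ge1$ (the empty word gives $\Lambda\ge1$). We take the hypersphere theory of Lemma~\ref{lem:hypersphere-theory} with $\mathcal S_0=\mathcal S_\epsilon$ and $\mathcal E_0=\mathcal E_{\epsilon'}$, choosing $\epsilon\epsilon'\Lambda\le1$, for instance $\epsilon=1/\Lambda$ and $\epsilon'=1$. By the criterion~\eqref{eq:checkpositivity} it suffices to check $e^\top T_w\omega\ge0$. For $\omega=(1,v)^\top$ and $e=(p,f)^\top$ we have
\[
e^\top T_w\omega=p+f^\top M_w v\ge p-\epsilon' p\,\Lambda\epsilon\ge0 .
\]
Hence $(\mathcal S_{\rm gen},\mathcal E_0)$ is a GPT, and $\mathcal T_0$ is consistent. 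Only existence of these spaces is needed, so non-computability of $\Lambda$ does no harm.

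\emph{Consistent $\Rightarrow$ bounded.} Suppose some $(\mathcal S_0,\mathcal E_0)$ makes $\mathcal T_0$ consistent. By Lemma~\ref{lem:hypersphere-Balls}(i), $\mathcal S_0$ contains states $(1,v_\ast+x)^\top$ with $\|x\|\le\epsilon$, and $\mathcal E_0\supseteq\Sigma_{1/2,\epsilon'}$. The effects $\tfrac12(1,f)^\top$ with $\|f\|\le\epsilon'$ give $1+f^\top M_w(v_\ast+x)\ge0$ for all such $f$ and $x$. Optimising over $f$ yields $\epsilon'\|M_w(v_\ast+x)\|\le1$. Using linearity, $M_wx=M_w(v_\ast+x)-M_wv_\ast$, so $\|M_wx\|\le2/\epsilon'$ for all $\|x\|\le\epsilon$. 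By Eq.~\eqref{eq:spectralnorm} this gives $\|M_w\|\le2/(\epsilon\epsilon')$ uniformly in $w$.

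The main obstacle we expect is this converse direction. The given GPT need not be centred at the mixed state, so part~(ii) of Lemma~\ref{lem:hypersphere-Balls} is unavailable. The difference trick above removes the offset $v_\ast$, and it should be stated carefully. We must also confirm that the instance is fully rational and finitely described, which holds since the $M_i$ are rational and $u$ is fixed.
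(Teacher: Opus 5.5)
Your proof is correct. It differs from the paper's in one instructive respect.

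\textbf{The paper's route.} The paper adds to $\mathcal T_0$ the extra generator $J=\operatorname{diag}(1,-\mathbb I)$, which commutes with every $T_i$. This forces $\mathcal S_{\rm gen}$ to be centrally symmetric, so part~(ii) of Lemma~\ref{lem:hypersphere-Balls} gives a state ball $\Sigma_{1,\epsilon}$ centred at the completely mixed state. A negative probability is then exhibited explicitly, with the state $(1,-\epsilon n)^\top$ and the effect $\tfrac12(1,\epsilon' n')^\top$, once $\lVert M_w\rVert>(\epsilon\epsilon')^{-1}$.

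\textbf{Your route.} You apply part~(i) directly to $(\mathcal S_0,\mathcal E_0)$. You remove the unknown offset $v_\ast$ with the triangle inequality, using $x=0$ to control $\lVert M_w v_\ast\rVert\leq 1/\epsilon'$. This yields the uniform bound $\lVert M_w\rVert\leq 2/(\epsilon\epsilon')$. That is the contrapositive of the paper's contradiction, up to a harmless factor of $2$. The sign freedom the paper obtains from $J$ is already supplied in your argument by the symmetry of the effect ball about $u/2$.

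\textbf{What each approach buys.} Your route shows the symmetry generator is superfluous. The reduction produces $K=k$ rather than $K=k+1$ candidate transformations on $\mathbb R^{d+1}$. For example, with the two-matrix unboundedness result of \cite{BT00}, the theorem already holds for two transformations in dimension $55$, rather than three. The paper itself uses essentially your idea in the proof of Theorem~\ref{Theo:undecidability3}. There it applies part~(i) and handles the offset by choosing between $v_\ast\pm\epsilon n$ via $\max\{|a+b|,|a-b|\}\geq|b|$. In exchange, the paper's route gives a slightly more explicit witness of negativity, since it uses states centred at the mixed state, matching the hypersphere completion used in the bounded direction.
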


If Theorem \ref{thm:unboundedness} holds for $k$ matrices of dimension
$d$, then Theorem \ref{Theo:undecidability1} holds for $K=k+1$
transformations acting on systems of dimension $d+1$.

The idea of the proof is simple. We place the matrices from Theorem~\ref{thm:unboundedness} in the lower-right blocks of the candidate transformations. If every matrix product is bounded, sufficiently small hypersphere state and effect spaces make all probabilities non-negative. If the products are unbounded, any full-dimensional state and effect spaces contain small balls, and a sufficiently large product stretches a state far enough to make some probability negative. The extra generator $\operatorname{diag}(1,-\mathbb I)$ supplies both signs of every Bloch vector and lets us centre the state ball at the completely mixed state.

\begin{lemma}[Hypersphere completion]
\label{lem:hypersphere-open-sets}
Suppose that every transformation in
$\mathcal{T}_{\rm gen}$ has the block form
\begin{align}
T_w
&=
\begin{pmatrix}
1&\boldsymbol{0}_{1\times d}\\
\boldsymbol{0}_{d\times1}&R_w
\end{pmatrix},
\nonumber\\
\lVert R_w\rVert
&\leq\Lambda ,
\label{eq:bounded-block-transformations}
\end{align}
for some $\Lambda >0$.  Then $\mathcal{T}_0$ admits a hypersphere
completion and is consistent.
\end{lemma}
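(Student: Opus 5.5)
We exhibit an explicit hypersphere model and check consistency through the criterion \eqref{eq:checkpositivity}. Take $\mathcal S_0=\mathcal S_\epsilon=\Sigma_{1,\epsilon}$ and $\mathcal E_0=\mathcal E_{\epsilon'}$ from Lemma~\ref{lem:hypersphere-theory}. Choose the radii so that
\[
\epsilon\,\epsilon'\Lambda\le 1 .
\]
A rational choice is convenient: for instance $\epsilon'=1$ and $\epsilon=1/\max\{1,\lceil\Lambda\rceil\}$. We also have $\Lambda\ge 1$, since the empty word gives $R=\mathbb I$. Hence $\epsilon\epsilon'\le 1$, and Lemma~\ref{lem:hypersphere-theory} guarantees that $(\mathcal S_0,\mathcal E_0)$ is a single-party GPT. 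Both spaces are full-dimensional, so they are admissible in the definition of consistency of $\mathcal T_0$.

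Next we verify \eqref{eq:checkpositivity}. Take $\omega=(1,v)^\top$ with $\lVert v\rVert\le\epsilon$, an effect $e=(p,f)^\top$ obeying \eqref{eq:hypersphere_effect_interval}, and any $T_w\in\mathcal T_{\rm gen}$. The block form \eqref{eq:bounded-block-transformations} gives $T_w\omega=(1,R_wv)^\top$. Then Cauchy--Schwarz and the definition \eqref{eq:spectralnorm} of the spectral norm give
\[
e^\top T_w\omega=p+f^\top R_w v\ge p-\lVert f\rVert\,\lVert R_w\rVert\,\lVert v\rVert\ge p\,(1-\epsilon'\Lambda\epsilon)\ge 0 .
\]
Equivalently, $T_w\omega\in\Sigma_{1,\epsilon\Lambda}$, and this set pairs non-negatively with $\mathcal E_{\epsilon'}$ by Lemma~\ref{lem:hypersphere-theory} applied with radius $\epsilon\Lambda$. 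Since $u-e\in\mathcal E_{\epsilon'}$ and $u^\top T_w=u^\top$, the same bound applied to $u-e$ gives $e^\top T_w\omega\le 1$.

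It remains to pass from generated states to $\mathcal S_{\rm gen}$. As noted after \eqref{eq:checkpositivity}, linearity and continuity extend the inequalities to the closed convex hull. Moreover $\mathcal S_{\rm gen}\subseteq\Sigma_{1,\epsilon\Lambda}$, which is compact, and $\mathcal S_{\rm gen}\supseteq\mathcal S_0$, which is full-dimensional. So $(\mathcal S_{\rm gen},\mathcal E_{\epsilon'})$ is a single-party GPT, the triple is consistent, and hence $\mathcal T_0$ is consistent.

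There is no real obstacle here. The only points needing care are these:
\begin{itemize}
\item the uniform bound $\Lambda$ is essential, because it is what allows a single fixed $\epsilon$ to work for all words;
\item the radii should be rational, so that the completion has a finite rational description as required of instances.
\end{itemize}
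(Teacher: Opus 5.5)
Your proposal is correct and follows the paper's proof: you use the same hypersphere completion $\Sigma_{1,\epsilon}$, $\mathcal E_{\epsilon'}$ with $\epsilon\epsilon'\Lambda\le1$, and the same Cauchy--Schwarz and spectral-norm bound to verify Eq.~\eqref{eq:checkpositivity}. The differences are cosmetic. You check all effects directly through the order-interval description \eqref{eq:hypersphere_effect_interval}, whereas the paper checks the generators $\frac12(1,\epsilon' y)^\top$, $0$ and $u$ and then extends by convexity. Your remark that the empty word forces $\Lambda\ge1$ shows that the paper's condition $\epsilon\epsilon'\le\min\{1,1/\Lambda\}$ reduces to $\epsilon\epsilon'\Lambda\le1$.
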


\begin{proof}
Take the unit to be $u=(1,\boldsymbol{0}_{1\times d})^\top$ and 
the hypersphere theory of
Sec.~\ref{ToyExample},
\[
\mathcal{S}_0=\Sigma_{1,\epsilon},
\qquad
\mathcal{E}_0=
\operatorname{conv}
\left(
\Sigma_{1/2,\epsilon'}\cup\{0,u\}
\right),
\]
where $\epsilon,\epsilon'>0$ satisfy
$\epsilon\epsilon'\leq \min\{1, 1/ \Lambda\}$.

Given the condition on $\epsilon\epsilon'$, the pair $(\mathcal{S}_0,
\mathcal{E}_0)$ is a single-party GPT; see Lemma \ref{lem:hypersphere-theory}.

We have $u^\top T_w=u^\top$, so it remains only to check the
condition in Eq.~\eqref{eq:checkpositivity}.
For
$\omega=(1,\epsilon x)^\top$ and
$e=\frac12(1,\epsilon' y)^\top$, with
$\lVert x\rVert,\lVert y\rVert\leq1$, one has
\begin{align}
e^\top T_w\omega
&=
\frac12
\left(
1+\epsilon\epsilon' y^\top R_wx
\right)
\nonumber\\
&\geq
\frac12
\left(
1-\epsilon\epsilon'\Lambda
\right)
\geq0.
\label{eTq}
\end{align}
The zero and unit effects give probabilities $0$ and $1$,
respectively, and positivity extends to their convex hull by
linearity.
\end{proof}

\begin{proof}[Proof of Theorem \ref{Theo:undecidability1}]
Given rational matrices
$\{M_i\in\mathbb{Q}^{d\times d}\}_{i=1}^{k}$, take the specified unit effect to be $u=(1,0,\ldots,0)^\top$, and define the set of candidate transformations to be
\begin{equation}
\mathcal{T}_0=
\left\{
\begin{pmatrix}
1&\boldsymbol{0}_{1\times d}\\
\boldsymbol{0}_{d\times1}&-\mathbb{I}
\end{pmatrix}
\right\}
\cup
\left\{
\begin{pmatrix}
1&\boldsymbol{0}_{1\times d}\\
\boldsymbol{0}_{d\times1}&M_i
\end{pmatrix}
\right\}_{i=1,\ldots,k}.
\label{eq:genT0-A}
\end{equation}
The first generator commutes with all the others.  Hence the
lower-right block of every generated transformation is either
$M_w$ or $-M_w$ (including $\pm\mathbb I$ when appropriate).

If the products $M_w$ are bounded, choose
$\Lambda > 0$ such that $\lVert M_w\rVert\leq\Lambda$ for every
word.  Lemma \ref{lem:hypersphere-open-sets}
then supplies state and effect spaces for which
$\mathcal{T}_0$ is consistent.

Conversely, suppose that the products $M_w$ are unbounded and that a
consistent completion nevertheless exists. Because $J=\operatorname{diag}(1,-\mathbb I)$ belongs to $\mathcal T_0$ and the generated state space is closed under every generated transformation, $J\mathcal S_{\rm gen}\subseteq\mathcal S_{\rm gen}$. Thus the Bloch-vector section of $\mathcal S_{\rm gen}$ is centrally symmetric about the completely mixed state. By 
Part~(ii) of
Lemma~\ref{lem:hypersphere-Balls} there
are $\epsilon,\epsilon'>0$ such that
\[
\Sigma_{1,\epsilon}\subseteq\mathcal{S}_{\rm gen},
\qquad
\Sigma_{1/2,\epsilon'}\subseteq\mathcal{E}_0.
\]
Choose a word $w$ for which
\(
\lVert M_w\rVert>(\epsilon\epsilon')^{-1}
\).
By Eq.~\eqref{eq:spectralnorm}, there are unit vectors $n,n'$ such
that
\(
n'^{\top}M_wn>(\epsilon\epsilon')^{-1}
\).
For the generated state and effect
\[
\omega=
\begin{pmatrix}1\\-\epsilon n\end{pmatrix}
\in\mathcal{S}_{\rm gen},
\qquad
e=
\frac12
\begin{pmatrix}1\\\epsilon'n'\end{pmatrix}
\in\mathcal{E}_0,
\]
we obtain
\[
e^{\top}T_w\omega
=
\frac12
\left(1-\epsilon\epsilon'n'^{\top}M_wn\right)
<0,
\]
contradicting consistency.  Thus $\mathcal{T}_0$ is consistent exactly
when the products $M_w$ are bounded.  Theorem
\ref{thm:unboundedness} proves the claim.
\end{proof}

\subsection{Theorem \ref{Theo:undecidability2}: consistency of a triple}

\begin{theorem}
Suppose we are given a finitely described single-party GPT
$(\mathcal S_0,\mathcal E_0)$ and a finite set of rational candidate
transformations $\mathcal T_0=\{T_i\}_{i=1}^K$. Under the promise that
$\mathcal T_0$ is consistent, it is undecidable whether the triple
$(\mathcal S_0,\mathcal E_0,\mathcal T_0)$ is consistent.
\label{Theo:undecidability2}
\end{theorem}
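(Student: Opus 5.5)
The plan is to reduce from the strict emptiness problem, Theorem~\ref{thm:emptiness}, rather than from unboundedness. The promise that $\mathcal T_0$ is consistent is met automatically if the candidate transformations are stochastic matrices: their products are bounded, and the classical simplex GPT is preserved. The undecidable content must therefore sit in the interplay between $\mathcal T_0$ and the given pair $(\mathcal S_0,\mathcal E_0)$. The target is an equivalence: the triple is consistent iff $F^\top M_wq\le\lambda$ for every word $w$.

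\emph{Construction.} Given $\{M_i\}_{i=1}^k$, $q$, $F$, $\lambda$ as in Theorem~\ref{thm:emptiness}, we may assume $F^\top q\le\lambda$; otherwise the empty word already answers the question. First adjoin a sink: add one extra automaton state $s$ with $M_i s=s$ and $F(s)=0$. This leaves all acceptance probabilities from $q$ unchanged. The ambient space is $\mathbb R^{d+1}$ with unit effect $u=(1,\ldots,1)^\top$; a rational change of basis brings this to the paper's convention $u=(1,0,\ldots,0)^\top$. Take $T_i=M_i\oplus 1$, which are column-stochastic, so that $u^\top T_i=u^\top$ and the simplex is invariant.

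Fix small rationals $\gamma<\lambda$ and $\eta>0$. Let $r_0=(1-\gamma)s+\gamma\cdot(\text{uniform})$ and $c=(1-\eta)q+\eta r_0$. Define
\[
\mathcal S_0=\operatorname{conv}\bigl(\{q\}\cup \text{(rational-radius ball of normalised vectors about } c)\bigr),
\]
\[
\mathcal E_0=\operatorname{conv}\bigl(\{0,u,e,u-e\}\cup\Sigma_{1/2,\epsilon'}\bigr),
\qquad e=\lambda u-F .
\]
Choose the state-ball radius $\rho$ small enough that every $p$ in it can be written $p=(1-\eta)q+\eta r$ with $r$ a probability vector putting weight at least $1-\gamma'$ on $s$, for some fixed $\gamma'<\lambda$.

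\emph{Checking that $(\mathcal S_0,\mathcal E_0)$ is a GPT.} It is enough to check positivity on the generators.
\begin{itemize}
\item The effect $e$ on $q$ gives $\lambda-F^\top q\ge0$.
\item The effect $e$ on a ball state $p$ gives $\lambda-(1-\eta)F^\top q-\eta F^\top r\ge \lambda-(1-\eta)\lambda-\eta\gamma'>0$.
\item The effect $u-e=(1-\lambda)u+F$ is nonnegative on all probability vectors.
\item Effects in $\Sigma_{1/2,\epsilon'}$ are positive on the bounded set $\mathcal S_0$ once $\epsilon'$ is small enough.
\end{itemize}
Full-dimensionality holds by construction, and all the data are finite and rational.

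\emph{Equivalence.} Suppose some word has $F^\top M_wq>\lambda$. Then $e^\top T_wq<0$, so the triple is inconsistent. Conversely, suppose $F^\top M_wq\le\lambda$ for every $w$. Since sink mass is preserved, $F^\top M_w r\le\gamma'$ for every ball state, and therefore
\[
e^\top T_wp\le\lambda\ \text{ fails nowhere},\quad\text{indeed}\quad e^\top T_wp\ \ge\ \lambda-(1-\eta)\lambda-\eta\gamma'>0 .
\]
The effect $u-e$ and the effects near $u/2$ remain positive on all generated states, since these are probability vectors lying close to the simplex. Hence condition~\eqref{eq:checkpositivity} holds, and Theorem~\ref{thm:emptiness} gives undecidability.

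\emph{Main obstacle.} The hard step is this converse direction. A full-dimensional ball of states is unavoidable, and a naive ball about an interior point could itself be driven by some $M_w$ above the cut point even when $q$ never is. The sink construction together with the decomposition $p=(1-\eta)q+\eta r$ is designed to prevent exactly this. I would verify carefully, with explicit rational choices of $\gamma,\gamma',\eta,\rho$, that the ball lies inside the simplex and that the decomposition is valid for every $p$ in it.
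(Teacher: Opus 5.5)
Your reduction works in substance but follows a different route from the paper's. The paper embeds each $M_i$ as the Bloch block of $T_i=\operatorname{diag}(1,M_i)$, so the completely mixed state is a common fixed point. Its states are a small hypersphere ball around that point plus the single state $(1,q)^\top$; its effects are a small ball around $u/2$ plus $e_F^\pm=\tfrac12(1,\pm F/\lambda)^\top$. The uniform bounds of Lemma~\ref{lemma3} make every pairing strictly positive except $e_F^-$ against $T_w(1,q)^\top$, which equals $\tfrac12(1-F^\top M_wq/\lambda)$. The promise comes from Lemma~\ref{lem:hypersphere-open-sets}, and restricting to instances with $F^\top q=0$ makes the base pair valid.

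You instead keep the classical embedding $T_i=M_i\oplus1$, in which the simplex is invariant. The promise is then immediate from the classical GPT, and every effect that is entrywise in $[0,u]$ or close to $u/2$ is harmless on all generated states. The only real danger is that a full-dimensional state ball is itself pushed past the cut point. You remove it with the absorbing sink and the decomposition $p=(1-\eta)q+\eta r$, which gives $F^\top T_wp\le(1-\eta)\lambda+\eta\gamma'<\lambda$. Your version needs no norm estimates and covers every instance with $F^\top q\le\lambda$. The paper's Bloch-block form is what Theorem~\ref{Theo:undecidability4} reuses: teleportation produces only zero-shift maps $v\mapsto\pm\Omega_wv$ with an outcome-dependent sign, so an argument resting on generated states staying inside a simplex does not carry over directly.

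One step needs repair. You never verify that $\mathcal E_0$ is an order interval, $\mathcal E_0=C(\mathcal E_0)\cap\bigl(u-C(\mathcal E_0)\bigr)$. The paper's definition of a single-party GPT requires this, and the paper checks it explicitly for its own $\mathcal E_0$. For your hull it generally fails. Measured against your Bloch origin $\omega_0$ (e.g.\ the uniform distribution), $e=\lambda u-F$ and $u-e$ lie at heights $e^\top\omega_0$ and $1-e^\top\omega_0$, not $\tfrac12$. When the Bloch dimension is at least $2$, the slices of $C(\mathcal E_0)\cap(u-C(\mathcal E_0))$ at heights strictly between $e^\top\omega_0$ and $\tfrac12$ have a ridge. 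There, two coaxial conical boundary pieces with different opening angles cross. The corresponding slice of $\operatorname{conv}(\{0,u,e,u-e\}\cup\Sigma_{1/2,\epsilon'})$ instead follows the ball swept out by the segments joining $e$ to $\Sigma_{1/2,\epsilon'}$. That ball is tangent to both cones, so the hull rounds the ridge off and is strictly smaller than the order interval.

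The repair costs nothing. Either define $\mathcal E_0$ as the order interval of the cone generated by your effects; all your positivity checks are conditions on that cone, so they are unchanged. Or choose the Bloch origin with $F^\top\omega_0=\lambda-\tfrac12$, so that $e$ and $u-e$ sit at height $\tfrac12$ symmetrically about $u/2$, exactly as the paper places $e_F^\pm$.
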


Thus, in contrast with Theorem~\ref{Theo:undecidability1},
undecidability persists even under the promise that the state--effect
pair and the transformation set are separately consistent: it arises
from deciding whether they are mutually compatible.  Here consistency
of $\mathcal T_0$ means that there exists \emph{some} choice of state
and effect spaces $(\mathcal S',\mathcal E')$ such that
$(\mathcal S',\mathcal E',\mathcal T_0)$ is consistent.

 If Theorem
\ref{thm:emptiness} holds for $k$ matrices of dimension $d$, then
Theorem \ref{Theo:undecidability2} holds for $K=k$ transformations
acting on systems of dimension $d+1$.

This second proof follows the same matrix--transformation correspondence, but uses stochastic matrices. Their products are automatically bounded, so the proposed transformations are consistent with some GPT. We then add one state determined by the initial probability distribution $q$ and two complementary effects determined by the accepting vector $F$. Every probability remains harmless except the pairing of this state with these effects after a word $w$. That pairing is non-negative exactly when the automaton's acceptance probability does not exceed $\lambda$.

\begin{lemma}[Uniform bounds for stochastic products]
\label{lemma3}
Let $M_w$ be any product of the column-stochastic matrices appearing in
Theorem~\ref{thm:emptiness}. Then
\[
\lVert M_w\rVert_1=1,
\qquad
\lVert M_w\rVert_\infty\leq d,
\qquad
\lVert M_w\rVert\leq\sqrt{d}.
\]
Moreover, for every probability vector $q$ and every
$F\in\{0,1\}^d$,
\[
\lVert M_wq\rVert\leq1,
\qquad
\lVert F^{\top}M_w\rVert\leq\sqrt{d}.
\]
\end{lemma}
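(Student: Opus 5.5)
The first step is to note that a finite product of column-stochastic matrices is again column-stochastic. Nonnegativity of entries is preserved under multiplication, and if $\mathbf 1^\top M_i=\mathbf 1^\top$ for each factor, then $\mathbf 1^\top M_w=\mathbf 1^\top M_{i_L}\cdots M_{i_1}=\mathbf 1^\top$ by induction on the word length. The empty word gives $\mathbb I$, which is trivially column-stochastic. So $M_w$ has entries in $[0,1]$ and every column sums to one, and all the bounds follow from elementary norm facts.

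For the induced norms, $\lVert M_w\rVert_1$ is the maximum absolute column sum, which equals $1$ exactly since the entries are nonnegative and each column sums to one. $\lVert M_w\rVert_\infty$ is the maximum absolute row sum. Each row has $d$ entries in $[0,1]$, so this is at most $d$. For the spectral norm I would use the standard interpolation inequality $\lVert M\rVert\le\sqrt{\lVert M\rVert_1\lVert M\rVert_\infty}$, which gives $\lVert M_w\rVert\le\sqrt d$. If I want to avoid quoting that inequality, I can prove it in one line: $\lVert M\rVert^2=\lambda_{\max}(M^\top M)\le\lVert M^\top M\rVert_1\le\lVert M^\top\rVert_1\lVert M\rVert_1=\lVert M\rVert_\infty\lVert M\rVert_1$.

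For the vector bounds, if $q$ is a probability vector then $M_wq$ is again a probability vector, because it has nonnegative entries and $\mathbf 1^\top M_wq=\mathbf 1^\top q=1$. Any probability vector $p$ satisfies $\lVert p\rVert_2\le\lVert p\rVert_1=1$, which gives $\lVert M_wq\rVert\le 1$. For $F\in\{0,1\}^d$, the row vector $F^\top M_w$ has entries $\sum_{i:F_i=1}(M_w)_{ij}$. Each such entry lies in $[0,1]$, since it is a partial column sum of nonnegative entries whose full sum is one. Its Euclidean norm is therefore at most $\sqrt d$. One could alternatively use $\lVert F^\top M_w\rVert\le\lVert F\rVert\,\lVert M_w\rVert\le d$, but the entrywise argument gives the sharper $\sqrt d$ claimed in the statement.

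No step here is a real obstacle. The only points needing care are to use the column (not row) convention for stochasticity consistently, and to justify the spectral-norm inequality if a self-contained argument is wanted.
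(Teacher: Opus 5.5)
Your proposal is correct and follows essentially the same route as the paper. Both use closure of column-stochasticity under products, the maximum column and row sum bounds, the inequality $\lVert A\rVert\leq\sqrt{\lVert A\rVert_1\lVert A\rVert_\infty}$, the bound $\lVert p\rVert\leq\lVert p\rVert_1$ for the probability vector $M_wq$, and the entrywise partial-column-sum bound for $F^{\top}M_w$. Your one-line derivation of the spectral-norm inequality is a small self-contained addition; the paper simply quotes it.
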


\begin{proof}
Products of column-stochastic matrices are column-stochastic. Their
maximum column sum is therefore $1$, while every row sum is at most
$d$. The spectral-norm inequality
\[
\lVert A\rVert
\leq
\sqrt{\lVert A\rVert_1\lVert A\rVert_\infty}
\]
then gives $\lVert M_w\rVert\leq\sqrt{d}$.
Since $M_wq$ is a probability vector,
\[
\lVert M_wq\rVert
\leq
\lVert M_wq\rVert_1
=1.
\]
Finally, for every $j\in\{1,\ldots,d\}$,
\[
0\leq (F^{\top}M_w)_j
=\sum_{i=1}^d F_i(M_w)_{ij}
\leq
\sum_{i=1}^d(M_w)_{ij}
=1.
\]
Consequently,
\[
\lVert F^{\top}M_w\rVert^2
=\sum_{j=1}^d (F^{\top}M_w)_j^2
\leq d,
\]
and hence $\lVert F^{\top}M_w\rVert\leq\sqrt{d}$.
\end{proof}

\begin{proof}[Proof of Theorem \ref{Theo:undecidability2}]

Let $\{M_i\}_{i=1,\ldots,k}$, $q$, $F$, and $0< \lambda <1$ be as in
Theorem~\ref{thm:emptiness}.  We restrict to the instances described
after that theorem, for which
\[
q=(1,0,\ldots,0)^{\top},
\qquad
F=(0,1,0,\ldots,0)^{\top}.
\]
Strict emptiness remains undecidable for this subclass.  Notice that
$F^{\top}q=0$; this will ensure that the additional state and effects
introduced below are compatible before any nontrivial transformation
is applied.

Define
\begin{equation}
\mathcal{T}_0=
\left\{
\begin{pmatrix}
1&\boldsymbol{0}_{1\times d}\\
\boldsymbol{0}_{d\times1}&M_i
\end{pmatrix}
\right\}_{i=1,\ldots,k}.
\label{eq:genT0}
\end{equation}
By Lemma~\ref{lemma3}, every generated lower-right block satisfies
$\lVert M_w\rVert\leq\sqrt d$.
Lemma~\ref{lem:hypersphere-open-sets}  therefore implies
that $\mathcal{T}_0$ is a consistent transformation set.

We next construct a single-party GPT to which these transformations
will be applied.  Define
\begin{align}
\mathcal{S}_0
&=
\operatorname{conv}
\left(
\Sigma_{1,\epsilon}
\cup
\left\{
\begin{pmatrix}1\\q\end{pmatrix}
\right\}
\right),
\label{eq:genS0}\\
e_F^\pm
&:=
\frac12
\begin{pmatrix}1\\ \pm F/\lambda\end{pmatrix},
\nonumber\\
\mathcal{E}_0
&=
\operatorname{conv}
\left(
\Sigma_{1/2,\epsilon'}
\cup
\left\{
\boldsymbol{0}_{(d+1)\times1},u,e_F^+,e_F^-
\right\}
\right),
\label{eq:genE0}
\end{align}
where
\begin{equation}
0<\epsilon'\leq\frac12,
\qquad
0<\epsilon\leq\frac{\lambda}{2\sqrt d}.
\label{eq:epsilonchoices}
\end{equation}

We first verify that $\mathcal{E}_0$ is a physical effect space in the
sense of Eq.~\eqref{eq:physical_effects}.  Let
\[
K=
\operatorname{conv}
\left(
B_{\epsilon'}^d
\cup
\left\{F/\lambda,-F/\lambda\right\}
\right).
\]
The set $K$ is compact, convex, full-dimensional, contains the origin,
and is centrally symmetric.  Equation~\eqref{eq:genE0} is equivalently
\begin{equation}
\mathcal{E}_0
=
\left\{
\begin{pmatrix}p\\z\end{pmatrix}:
0\leq p\leq1,\ 
z\in\min\{p,1-p\}K
\right\}.
\label{eq:E0_order_interval}
\end{equation}
Indeed, the section of $\mathcal{E}_0$ at $p=1/2$ is
$\{\frac12(1,f)^{\top}:f\in K\}$, and taking its convex hull with
$0$ and $u$ gives Eq.~\eqref{eq:E0_order_interval}.  Its conic hull is
\[
C(\mathcal{E}_0)
=
\left\{
\begin{pmatrix}p\\z\end{pmatrix}:
p\geq0,\ z\in pK
\right\}.
\]
This is a proper cone because $K$ is compact, full-dimensional, and
contains the origin in its interior.
Since $K=-K$, Eq.~\eqref{eq:E0_order_interval} gives
\[
\mathcal{E}_0
=
C(\mathcal{E}_0)
\cap
\bigl(u-C(\mathcal{E}_0)\bigr).
\]
By Eq.~\eqref{eq:physical_effects}, this proves that
$\mathcal{E}_0$ is a valid physical effect space.
The state set $\mathcal{S}_0$ is compact and convex, is
full-dimensional in the normalisation hyperplane, and consists only of
normalised states.  Its conic hull is therefore also proper.  It
remains to check positivity between
$\mathcal{S}_0$ and $\mathcal{E}_0$.  We do this below together with
the positivity conditions for the generated states; taking the empty
word will establish that $(\mathcal{S}_0,\mathcal{E}_0)$ itself is a
single-party GPT.

For a word $w=i_1\ldots i_L$, write
$M_w=M_{i_L}\cdots M_{i_1}$ and
\[
T_w=
\begin{pmatrix}
1&\boldsymbol{0}_{1\times d}\\
\boldsymbol{0}_{d\times1}&M_w
\end{pmatrix}.
\]
For
\(
\omega=(1,v)^{\top}
\)
and
\(
e=p(1,f)^{\top}
\),
\begin{equation}
e^{\top}T_w\omega
=p\left(1+f^{\top}M_wv\right).
\label{Eq:eTwomega}
\end{equation}
Because both sets in Eqs.~\eqref{eq:genS0} and \eqref{eq:genE0} are
convex hulls, it suffices to check positivity on the displayed
generating subsets.  Apart from the trivial effects $0$ and $u$, there
are four cases:
\begin{align}
\frac12\left(1+\epsilon\epsilon'x'^{\top}M_wx\right)
&\geq
\frac12\left(1-\epsilon\epsilon'\sqrt d\right)>0,
&&
\lVert x\rVert,\lVert x'\rVert\leq1,
\label{eq:PPP0}\\
\frac12\left(1+\epsilon'x'^{\top}M_wq\right)
&\geq
\frac12(1-\epsilon')>0,
&&
\lVert x'\rVert\leq1,
\label{eq:PPP1}\\
\frac12\left(1\pm
\frac{\epsilon}{\lambda}F^{\top}M_wx\right)
&\geq
\frac12\left(1-\frac{\epsilon\sqrt d}{\lambda}\right)>0,
&&
\lVert x\rVert\leq1,
\label{eq:PPP2}\\
(e_F^\pm)^{\top}T_w
\begin{pmatrix}1\\q\end{pmatrix}
&=
\frac12\left(1\pm
\frac1\lambda F^{\top}M_wq\right).
\label{eq:PPP3}
\end{align}
The first inequality follows from
$\lVert M_w\rVert\leq\sqrt d$; the second from
$\lVert M_wq\rVert\leq1$; and the third from
$\lVert F^{\top}M_w\rVert\leq\sqrt d$.
The choices in Eq.~\eqref{eq:epsilonchoices} make all three lower
bounds strictly positive.

For the empty word, $M_w=\mathbb I$.  Equations
\eqref{eq:PPP0}--\eqref{eq:PPP2} are therefore positive, while
Eq.~\eqref{eq:PPP3} equals $1/2$ because $F^{\top}q=0$.
By convexity, this proves positivity for every pair in
$\mathcal{S}_0\times\mathcal{E}_0$.  Together with the order-interval
description of $\mathcal{E}_0$, it proves that
$(\mathcal{S}_0,\mathcal{E}_0)$ is a single-party GPT.

For an arbitrary word, $F^{\top}M_wq\geq0$, so the plus sign in
Eq.~\eqref{eq:PPP3} is always positive.  The expression with the minus
sign is nonnegative exactly when
\[
F^{\top}M_wq\leq\lambda.
\]
All other cases are already strictly positive, and convexity extends these checks to
all states and effects.  Consequently, the triple
$(\mathcal{S}_0,\mathcal{E}_0,\mathcal{T}_0)$ is consistent if and
only if $F^{\top}M_wq\leq\lambda$ for every word $w$.  This is the
strict-emptiness problem of Theorem~\ref{thm:emptiness}, and is
therefore undecidable.

\end{proof}

\section{Consistency of Translation-Invariant Multipartite GPT is Undecidable}
\label{sec:multipartite-undecidability}

The aim of this section is to reproduce the two single-system undecidability results without putting transformations into the theory by hand. Instead, entangled states and effects will implement the transformations indirectly through teleportation. A measurement outcome contributes an overall positive factor to the resulting state, but this factor does not affect whether a later probability is positive or negative. What matters is the matrix acting on the Bloch vector, and repeated teleportation multiplies these matrices just as repeated time evolution did in the previous section.

The main difficulty is to prevent the entangled generators from combining in unintended ways that would introduce additional consistency conditions. We solve this by placing them on alternating bonds of a one-dimensional chain. The pattern repeats every two sites, but the sites retain distinct labels. This forces every connected experiment to follow a path along the chain, making it possible to list all probabilities that must be checked. We first define precisely what is meant by such a translation-invariant generating set, then state the two results, give the common construction used in both proofs, and finally give the proofs.

\subsection{Period-two translation-invariant chain}
\label{sec:1DTrans}

We now describe the class of multipartite systems used in the two
reductions.  The systems are arranged on an infinite one-dimensional
chain with sites labelled by $n\in\mathbb Z$.  Site $n$ carries an
elementary system $A^n$.  The chain has translation symmetry with period $2$.
In the following, we generalise the definitions of
Sec.~\ref{subsec:Comp} to this setting.

Let $\Lambda\subset\mathbb Z$ be a finite set of sites.
Multipartite states and effects with support on $\Lambda$ are
tensors with one tensor factor for each site in $\Lambda$.  Joint
probabilities, marginals, and conditional states or effects are
obtained by taking tensor products and contracting tensor indices.
Importantly, the labels are all distinct: a tensor acting
on $A^n$ can be contracted only with another tensor carrying the same
site label.

We require that all even sites have one local GPT
and all odd sites another. Translation invariance then imposes
\[
(\mathcal S_0^{A^{n+2}},\mathcal E_0^{A^{n+2}})
\simeq
(\mathcal S_0^{A^n},\mathcal E_0^{A^n}).
\]

The translation symmetry is important because it allows us to provide a
finite description of the infinite chain. Thus, given a state or effect,
translation symmetry supplies
copies of that state or effect at translated positions. The
translation symmetry does not allow arbitrary relabellings of the tensor
indices. It is this restriction that prevents the entangled generators
below from being combined in unwanted ways.

If a finite-support tensor $v$ acts on sites
$\Lambda\subset\mathbb Z$, let $\tau(v)$ denote the tensor with the
same coefficients acting on $\Lambda+2$. From one seed tensor we
obtain the orbit
\[
\operatorname{Orb}_2(v)=\{\tau^k(v):k\in\mathbb Z\}.
\]
A translation-invariant family is \emph{finitely generated up to
translations} if it is the union of the orbits of finitely many seed
tensors or sets of tensors, each having a finite description. The input is therefore finite even though the chain and the
translated family are infinite.

\emph{A multipartite generating family} consists of the two local
GPTs (one for the even sites and one for the odd sites), finitely many
finite-support multipartite seed states, and finitely many
finite-support multipartite seed measurements, together with all of
their translates. Each seed measurement is required to sum to the
product unit effect on its support.

We denote by
$\mathcal{S}_0^{\rm multi}$ the multipartite seed states 
and all their translates.
We denote by $\mathcal{E}_0^{\rm multi}$ the multipartite
seed effects, and all their translates.

A \emph{finite experiment} is a finite tensor network built from
single-site states and effects and finite-support multipartite states
and effects. On the preparation side, the chosen state tensors have
pairwise disjoint supports; the same is true of the chosen
measurement-outcome tensors on the measurement side. A state leg and
an effect leg may be contracted only when they carry the same site
label. After contraction, open state legs define an unnormalised
conditional state; analogously, open effect legs define an
unnormalised conditional effect.
A closed network gives the scalar weight of an outcome.
Disconnected networks represent tensor products, and contraction with
unit effects gives marginals.

Consider a multipartite generating family.
For each finite set of sites $\Lambda$, let
$C(\mathcal S_{\rm gen}^{\Lambda})$ be the closed cone generated by all
conditional states on $\Lambda$ obtained from finite experiments and
their marginals. Define
$C(\mathcal E_{\rm gen}^{\Lambda})$ analogously from the conditional
effects generated by finite experiments, including their
coarse-grainings.
The normalised state space is the section
\[
\mathcal S_{\rm gen}^{\Lambda}
=\{\sigma\in C(\mathcal S_{\rm gen}^{\Lambda}):
(u^\Lambda)^\top\sigma=1\}.
\]

The generating family is \emph{consistent} if, for every finite
$\Lambda$, the generated state and effect cones are proper and
generating, are mutually positive, contain the required product
cones, and have the factorised unit effect
\[
u^\Lambda=\bigotimes_{n\in\Lambda}u^n.
\]
The specified measurements must sum to the corresponding unit effects.
Consequently, every closed network has nonnegative weight, and the
outcome probabilities of every measurement sum to one.

For the period-two construction below, Lemma~\ref{lem:path-completion} 
shows that these requirements reduce to nonnegativity of
four families of path probabilities.
\emph{A pair
$(\mathcal S_0^{\rm multi},\mathcal E_0^{\rm multi})$ is consistent} if
some period-two choice of local GPTs completes it in this sense.

In Theorem~\ref{Theo:undecidability3}, the local GPTs are not part
of the input: we ask whether any period-two local completion exists,
and we show that for the instances used in the reduction, whenever
a completion exists, an explicit hypersphere completion is available.
In Theorem~\ref{Theo:undecidability4}, the two local GPT descriptions
are part of the finite rational input, and the multipartite generators
are promised to be separately consistent.  We ask whether the
given local GPTs and multipartite generators are consistent when
combined.

\subsection{Statements of the results}

The two statements parallel Theorems~\ref{Theo:undecidability1} and \ref{Theo:undecidability2}. The first asks whether the multipartite generators are consistent with some choice of local GPTs. The second fixes the local GPTs in advance and asks whether the two separately consistent parts remain consistent when combined.

\begin{theorem}
\textbf{Consistency of translation-invariant multipartite generators
is undecidable.}
Given period-two translation-invariant families of multipartite
state and effect generators specified by finitely many rational
finite-support seed tensors and all their translates,
$(\mathcal{S}_0^{\rm multi},\mathcal{E}_0^{\rm multi})$, it is
undecidable whether there exist period-two translation-invariant
single-party GPTs that complete them to a consistent multipartite
generating set.
\label{Theo:undecidability3}
\end{theorem}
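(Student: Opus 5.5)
The plan is to reduce from Theorem~\ref{thm:unboundedness}, mirroring the proof of Theorem~\ref{Theo:undecidability1} but with transformations replaced by teleportation along the chain, using Eq.~\eqref{eq:telep_channel}.

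\emph{Construction.} Given rational matrices $\{M_i\}_{i=1}^k$ in $\mathbb Q^{d\times d}$, let every site carry $\mathbb R^{d+1}$ with $u=(1,\boldsymbol 0)^\top$. On the bonds $(2m,2m+1)$ place a seed state with no one-body terms,
\[
\omega^{A^{2m}A^{2m+1}}=u\otimes u+\sum_{\gamma}\phi_\gamma\otimes\phi_\gamma,
\]
so that $\Omega=\mathbb I$. On the bonds $(2m+1,2m+2)$ place a seed measurement with outcomes
\[
e_j=p_j\Bigl(u\otimes u+\sum_{\nu\mu}(H_j)_{\nu\mu}\phi_\nu\otimes\phi_\mu\Bigr),
\qquad j=0,\ldots,k,
\]
with $H_0=-\mathbb I$ and $H_i=c\,M_i$. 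Here $c>0$ is a rational scale and the $p_j>0$ sum to one. To make $\sum_j e_j=u\otimes u$, the $H$-parts must cancel; I would add one further outcome $e_{k+1}$ that absorbs the negative of the sum with its own weight. This introduces one extra generator matrix, which is harmless since boundedness is unchanged if that matrix is itself a bounded combination, but this needs checking. Teleporting along the chain then multiplies matrices, so every connected path experiment produces a Bloch map $\pm c^L M_w$ up to a positive factor. Since $c$ can be absorbed (boundedness of $\{cM_i\}$ versus $\{M_i\}$ differs), I would instead take $c=1$ and keep the scaling in $p_j$ only.

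\emph{Path reduction.} Labels are distinct and contraction respects site labels, so every connected closed network is a path along the chain. Alternating state and effect bonds are capped at the two ends by local states or effects, or by open legs. This gives four families of path probabilities according to which end types occur: local state $\to$ local effect, and the variants where the chain ends on an unused half of a bond tensor, i.e.\ a marginal, which is $u$-trivial because there are no one-body terms. Each weight has the form $\prod p\cdot(1+f^\top R\,v)$ with $R$ a signed product of generators. Consistency is therefore equivalent to $1+f^\top(\pm M_w)v\ge0$ for all local $v$, $f$ and words $w$, together with the analogous pure bond-tensor positivity: $\omega^{AB}$ paired with product effects, and $e_j$ paired with product states. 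This is the content I expect of the forthcoming Lemma~\ref{lem:path-completion}, and proving it carefully is the main obstacle. One must show that generated cones are proper, that marginals and conditional states are covered by the same inequalities, and that no non-path network arises.

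\emph{Both directions.} If the products are bounded by $\Lambda$, take hypersphere local GPTs $\Sigma_{1,\epsilon}$, $\mathcal E_{\epsilon'}$ on both sublattices with $\epsilon\epsilon'\Lambda\le1$ and $\epsilon\epsilon'\le1$. Every path inequality then follows exactly as in Eq.~\eqref{eTq}, and the bond tensors themselves are positive on products because $\epsilon^2\|H_j\|,\epsilon'^2\|\Omega\|\le1$. The latter requires $\epsilon,\epsilon'$ small enough, compatible with the product bound. If the products are unbounded, the $H_0=-\mathbb I$ outcome makes every generated local state space centrally symmetric, since teleporting through outcome $0$ implements $J$. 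Lemma~\ref{lem:hypersphere-Balls}(ii) then gives balls $\Sigma_{1,\epsilon}$ and $\Sigma_{1/2,\epsilon'}$, and a word with $\|M_w\|>(\epsilon\epsilon')^{-1}$ produces a negative path probability exactly as in the proof of Theorem~\ref{Theo:undecidability1}. Hence a completion exists if and only if the products are bounded, and Theorem~\ref{thm:unboundedness} yields undecidability.
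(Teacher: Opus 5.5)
\textbf{There is a genuine gap: your construction breaks the bounded direction of the reduction.} Your overall strategy matches the paper's: reduce from Theorem~\ref{thm:unboundedness}, store the matrices in bond tensors so that teleportation multiplies them, use small hypersphere GPTs when products are bounded and interior balls when they are unbounded.

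The problem is the seed measurement. The outcomes $e_j=p_j\bigl(u\otimes u+\sum(H_j)_{\nu\mu}\phi_\nu\otimes\phi_\mu\bigr)$ with $H_0=-\mathbb I$ and $H_i=M_i$ do not sum to $u\otimes u$. The extra outcome you add must therefore carry $H_{k+1}=p_{k+1}^{-1}\bigl(p_0\mathbb I-\sum_i p_iM_i\bigr)$. Teleportation through that outcome puts $H_{k+1}$ into path products. So, for your own generators, it is false that every connected path produces $\pm M_w$, and false that consistency reduces to $1+f^\top(\pm M_w)v\ge0$. The conditions involve the whole semigroup generated by $\{-\mathbb I,M_1,\dots,M_k,H_{k+1}\}$. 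You need that semigroup to be bounded whenever $\{M_i\}$ generates a bounded one, and this is exactly the point you leave as ``needs checking''.

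It can be repaired. Take $p_{k+1}\ge 1/2$ and use the norm $\lVert x\rVert_*=\sup_w\lVert M_wx\rVert$ (sup over all words, including the empty one). In this norm every $M_i$ and $-\mathbb I$ is a contraction, and $\lVert H_{k+1}\rVert_*\le(1-p_{k+1})/p_{k+1}\le1$. As written, however, the reduction is incomplete.

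The paper avoids the issue entirely. It puts the matrices in $k$ bipartite \emph{state} seeds $\omega_l=uu+\sum(\Omega_l)_{\mu\nu}\phi_\mu\phi_\nu$, whose Bloch block is unconstrained. It then uses a single two-outcome measurement $e_\pm=\frac12(uu\pm\sum_\mu\phi_\mu\phi_\mu)$, which is automatically normalised and supplies both signs $\pm\Omega_w$.

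\textbf{Second, your enumeration of closed paths is incomplete, and your bounded-case parameters are too weak.} Besides state-to-effect paths, there are effect-to-effect and state-to-state paths of arbitrary length (the paper's $P^3$ and $P^4$). Their weights are proportional to $1\pm f^\top R f'$ and $1\pm v'^\top R v$ for an arbitrary path product $R$. Your ``marginal'' variants are only the $u$-endpoint special cases of these. Checking single bond tensors against product states or effects covers only the shortest instances.

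Your choice $\epsilon\epsilon'\Lambda\le1$, together with $\epsilon^2\lVert H_j\rVert\le1$ and $\epsilon'^2\le1$, does not control the longer paths, because $\sup_w\lVert R_w\rVert$ can exceed $\max_j\lVert H_j\rVert$. You need both $\epsilon^2\Lambda\le1$ and $\epsilon'^2\Lambda\le1$, for example $\epsilon=\epsilon'=\delta$ with $\delta^2\Lambda\le1$ as in the paper. You then still need the sufficiency statement (the paper's Lemma~\ref{lem:path-completion}), which you defer, to pass from these path inequalities to full consistency.

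\textbf{Your unbounded direction is essentially sound.} One correction: teleportation through $H_0$ moves a state by two sites. Central symmetry of the generated state space at a fixed site therefore needs an appeal to translation invariance. The paper needs no symmetry at all. It applies Lemma~\ref{lem:hypersphere-Balls}(i) around an arbitrary interior point $v_\ast$, and uses the freedom to choose both $\sigma$ and the outcome sign $s$ to force $1-\lvert c_\sigma\rvert<0$.
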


\begin{theorem}
\textbf{Compatibility with fixed single-party GPTs is undecidable.}
Given finitely described, period-two translation-invariant
single-party GPTs
\[
\bigl\{
(\mathcal{S}_0^{A^n},\mathcal{E}_0^{A^n})
\bigr\}_{n\in\mathbb{Z}}
\]
and a consistent, finitely generated up to translations, period-two
translation-invariant pair of multipartite generator families
$(\mathcal{S}_0^{\rm multi},\mathcal{E}_0^{\rm multi})$, it is
undecidable whether their combination is a consistent multipartite
generating set.
\label{Theo:undecidability4}
\end{theorem}

Thus, in Theorem~\ref{Theo:undecidability4}, the single-party GPTs and
the multipartite generators are separately consistent.  The
undecidable question is whether they are mutually compatible.

Although the chain contains infinitely many sites, the
 inputs are finite: one specifies the even and odd local
dimensions, the $k$ bipartite state seeds, and  the two effects of one
bipartite measurement.  Translation supplies all copies along the
chain.  In Theorem~\ref{Theo:undecidability4}, one additionally
specifies the two local GPT descriptions repeated on the even and odd
sites.  

Both reductions below use period two and local dimension $d+1$.
If the relevant matrix problem involves $k$ matrices of dimension
$d$, the construction uses $k$ independent multipartite state
generators and the two outcomes of one multipartite measurement up to
translations.

\subsection{Period-two chain that simulates matrix products}

Both undecidability proofs use the same chain, so we describe it once before giving the proofs. The construction has two ingredients. On one set of alternating bonds we place bipartite states labelled by matrices $\Omega_l$. On the other bonds we place a two-outcome bipartite measurement, with outcomes labelled $+$ and $-$. Contracting one state with one measurement teleports a local state by two sites and applies the matrix $\pm\Omega_l$ to its Bloch vector. Repeating the procedure along the chain produces the product $\pm\Omega_w$, for any word $w$.

The even and odd sites play different directional roles. A state placed on an even site can move only to the right, while a state on an odd site can move only to the left. This is not an additional physical assumption; it follows from which neighbouring pairs carry state generators and which carry effect generators. The alternating geometry is what keeps the construction under control: every connected experiment is a path, and every path contains one well-ordered matrix product.

The reader may therefore keep the following picture in mind while reading the formulas. The bipartite states store the matrices, the bipartite measurements join consecutive steps, and the local state or effect at each end supplies the vectors on the two sides of the matrix product.

\paragraph{Local coordinates and generator geometry.}
All even sites have one single-party GPT and all odd sites have
another, with the same vector-space dimension $d+1$.  Write their
nonzero states and effects as
\begin{align}
\omega^{2n}(v)
&=
\begin{pmatrix}1\\v\end{pmatrix},
&
e^{2n}(p,f)
&=
p\begin{pmatrix}1\\f\end{pmatrix},
\nonumber\\
\omega^{2n+1}(v')
&=
\begin{pmatrix}1\\v'\end{pmatrix},
&
e^{2n+1}(p',f')
&=
p'\begin{pmatrix}1\\f'\end{pmatrix},
\label{Eq:local-even-odd}
\end{align}
where $p,p'>0$.  This parametrisation lists the nonzero effects; any
experiment containing the null effect has probability zero.

On every bond $(2n+1,2n)$, introduce the two effects
\begin{equation}
e_\pm^{2n+1,2n}
=
\frac12
\left(
u^{2n+1}u^{2n}
\pm
\sum_{\mu=1}^d
\phi_\mu^{2n+1}\phi_\mu^{2n}
\right).
\label{Eq:E2}
\end{equation}
They form a measurement because
\[
e_+^{2n+1,2n}+e_-^{2n+1,2n}
=u^{2n+1}u^{2n}.
\]

Read from left to right along the chain, these measurement effects occupy the even--odd bonds. The opposite, odd--even bonds will carry the bipartite state generators. Thus state and effect bonds alternate along the chain and never compete for the same role.

Let $\{\Omega_l\}_{l=1}^k$ be a finite set of $d\times d$ matrices labelled by $l$.  On every
complementary bond $(2n+2,2n+1)$, introduce the states
\begin{equation}
\omega_l^{2n+2,2n+1}
=
u^{2n+2}u^{2n+1}
+
\sum_{\mu,\nu=1}^d
(\Omega_l)_{\mu\nu}
\phi_\mu^{2n+2}\phi_\nu^{2n+1}.
\label{Eq:S2}
\end{equation}
Equations~\eqref{Eq:E2} and \eqref{Eq:S2}, together with all their
translations by an even number of sites, define the multipartite
generator families used in both reductions.

Only the label $l$, and hence the matrix $\Omega_l$, varies among the bipartite state generators. Although there are infinitely many translated copies along the chain, there are only $k$ independent state generators and two independent effect outcomes in the finite description.

The systems $A^n$ retain distinct labels even though their local
descriptions repeat with period two.  Consequently, the only
contractions between the bipartite generators are those permitted by
their displayed site labels.

\begin{figure*}[t]
\centering
\resizebox{\textwidth}{!}{%
\begin{tikzpicture}[x=1.05cm,y=1.05cm]
  \begin{scope}
    \node[diagram panel title] at (7.2,1.75)
      {(a) Graphical notation};

    \diagramsitelabel{0.7}{n}
    \diagramlocalstate{0.7}{\omega(v)}
    \node[diagram object label] at (0.7,-2.00) {single state};

    \diagramsitelabel{4.0}{n}
    \diagramlocaleffect{4.0}{e(p,f)}
    \node[diagram object label] at (4.0,-2.00) {single effect};

    \diagramsitelabel{6.5}{2n+1}
    \diagramsitelabel{8.0}{2n+2}
    \diagrambipstate{6.5}{8.0}{\omega_l(\Omega_l)}
    \node[diagram object label] at (7.25,-2.00)
      {bipartite state};

    \diagramsitelabel{10.2}{2n}
    \diagramsitelabel{11.7}{2n+1}
    \diagrambipeffect{10.2}{11.7}{e_s,\ s=\pm1}
    \node[diagram object label] at (10.95,-2.00)
      {bipartite effect\\(factor $1/2$)};

    \draw[->,line width=0.7pt] (13.7,-0.85)--(13.7,0.85);
    \node[diagram object label,rotate=90] at (14.05,0)
      {contraction direction};
  \end{scope}

  \begin{scope}[yshift=-5.1cm]
    \node[diagram panel title] at (7.2,2.15)
      {(b) Directional teleportation};

    \foreach \x/\lab in {
      0.5/{2n},1.7/{2n+1},2.9/{2n+2},
      4.1/{2n+3},5.3/{2n+4}
    }{\diagramsitelabel{\x}{\lab}}
    \diagramlocalstate{0.5}{\omega^{2n}(v)}
    \diagrambipeffect{0.5}{1.7}{e_{s_1}}
    \diagrambipstate{1.7}{2.9}{\omega_{l_1}(\Omega_{l_1})}
    \diagrambipeffect{2.9}{4.1}{e_{s_2}}
    \diagrambipstate{4.1}{5.3}{\omega_{l_2}(\Omega_{l_2})}
    \foreach \x in {0.5,1.7,2.9,4.1}{\diagramjunction{\x}}
    \draw[diagram local wire] (5.3,0)--(5.3,0.78);
    \node[diagram formula,anchor=south] at (5.3,0.92)
      {$\widetilde{\omega}^{\,2n+4}_{s,w}(v)$};
    \draw[->,line width=0.7pt] (0.7,-1.82)--(5.1,-1.82);
    \node[diagram formula,anchor=north] at (2.9,-1.92)
      {even state propagates to the right};

    \foreach \x/\lab in {
      9.1/{2n+1},10.3/{2n+2},11.5/{2n+3},
      12.7/{2n+4},13.9/{2n+5}
    }{\diagramsitelabel{\x}{\lab}}
    \draw[diagram local wire] (9.1,0)--(9.1,0.78);
    \node[diagram formula,anchor=south] at (9.1,0.92)
      {$\widetilde{\omega}^{\,2n+1}_{s,w}(v')$};
    \diagrambipstate{9.1}{10.3}{\omega_{l_1}(\Omega_{l_1})}
    \diagrambipeffect{10.3}{11.5}{e_{s_1}}
    \diagrambipstate{11.5}{12.7}{\omega_{l_2}(\Omega_{l_2})}
    \diagrambipeffect{12.7}{13.9}{e_{s_2}}
    \diagramlocalstate{13.9}{\omega^{2n+5}(v')}
    \foreach \x in {10.3,11.5,12.7,13.9}{\diagramjunction{\x}}
    \draw[->,line width=0.7pt] (13.7,-1.82)--(9.3,-1.82);
    \node[diagram formula,anchor=north] at (11.5,-1.92)
      {odd state propagates to the left};
  \end{scope}
\end{tikzpicture}%
}
\caption{\label{fig:notation-directional}
Graphical conventions and directional teleportation.
(a) Filled upward triangles denote single-party states, open downward
triangles single-party effects, solid cups bipartite state generators,
and dashed caps bipartite effect outcomes.  State generators occupy
odd--even bonds, whereas effect outcomes occupy even--odd bonds.
(b) Representative two-step contractions.  An even state propagates
to the right and acquires the ordered product $\Omega_w$; an odd state
propagates to the left across the same state bonds and acquires
$\Omega_w^{\top}$.  Black dots mark contractions, and the endpoint
without a dot is the open wire carrying the conditional output state.
The general $L$-step statement is
Lemma~\ref{lem:directional-teleportation}.}
\end{figure*}

The single-party and bipartite states and effects can be represented graphically as in Fig.~\ref{fig:notation-directional}(a).

\begin{lemma}[Directional teleportation]
\label{lem:directional-teleportation}
Consider a chain segment containing $L\geq 0$ consecutive state bonds,
read from left to right.  Its $j$th state bond occupies
$(2n+2j,2n+2j-1)$ and carries a label
$l_j\in\{1,\ldots,k\}$, for $j=1,\ldots,L$.  Define the word
$w=l_1\ldots l_L\in\{1,\ldots,k\}^L$, write $|w|=L$, and set
\begin{equation}
\Omega_w
=
\Omega_{l_L}\cdots\Omega_{l_1}.
\label{eq:Omega-word}
\end{equation}
For the empty word $\epsilon$, set
$|\epsilon|=0$ and $\Omega_\epsilon=I_d$.
Starting with an even state $\omega^{2n}(v)$ at the left endpoint and
contracting towards the right, or with an odd state
$\omega^{2n+2L+1}(v')$ at the right endpoint and contracting towards
the left, produces the respective subnormalised conditional states
\begin{align}
\widetilde{\omega}^{\,2n+2L}_{s,w}(v)
&=
\frac{1}{2^L}
\begin{pmatrix}
1\\s\,\Omega_wv
\end{pmatrix},
\label{eq:even-teleportation}\\
\widetilde{\omega}^{\,2n+1}_{s,w}(v')
&=
\frac{1}{2^L}
\begin{pmatrix}
1\\s\,\Omega_w^{\top}v'
\end{pmatrix},
\label{eq:odd-teleportation}
\end{align}
where $s\in\{+1,-1\}$ is the product of the measurement-outcome
signs along the contraction path; for $L=0$, $s=+1$.  Thus even
states propagate to the right, whereas odd states propagate to the
left.
\end{lemma}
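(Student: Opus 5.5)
I will prove the lemma by induction on $L$, reducing each inductive step to a single-step teleportation computation of the type in Sec.~\ref{subsec:Teleportation}. The base case $L=0$ is immediate: no contraction is performed, the conditional state is the initial local state $\omega^{2n}(v)=(1,v)^\top$ (respectively $\omega^{2n+1}(v')$), and this agrees with Eqs.~\eqref{eq:even-teleportation}--\eqref{eq:odd-teleportation} with $s=+1$, $\Omega_\epsilon=I_d$ and prefactor $2^0=1$.

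\textbf{Single step, even case.} Take a subnormalised state $c\,(1,x)^\top$ on site $2m$. Contract it with the effect $e_{s'}^{2m+1,2m}$ of Eq.~\eqref{Eq:E2} on site $2m$, and with the state $\omega_l^{2m+2,2m+1}$ of Eq.~\eqref{Eq:S2} on site $2m+1$. Site $2m+2$ is left open. This is exactly the configuration of Eq.~\eqref{eq:telep_cond} with $A=A^{2m}$, $B=A^{2m+1}$ and $C=A^{2m+2}$. Both bipartite tensors have no one-body terms, so Eq.~\eqref{eq:telep_channel} applies. Here $p_k=1/2$, $H=s'I_d$ and $\Omega^{CB}=\Omega_l$. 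I will verify this explicitly by using the orthonormality relations \eqref{eq:u1}: pairing $u$ with $u$ gives $1$, pairing $\phi_\mu$ with $\phi_\nu$ gives $\delta_{\mu\nu}$, and all cross terms vanish. The output is $\tfrac{c}{2}(1,s'\Omega_l x)^\top$ on site $2m+2$. Iterating this over $j=1,\dots,L$ with labels $l_j$ and signs $s_j$ produces the prefactor $2^{-L}$, the sign $s=\prod_j s_j$, and the ordered product $\Omega_{l_L}\cdots\Omega_{l_1}=\Omega_w$, because each new matrix multiplies from the left. This gives Eq.~\eqref{eq:even-teleportation}.

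\textbf{Odd case and index order.} Now start with $c(1,y)^\top$ on site $2m+1$. The state bond $(2m+1,2m)$ to its left carries $\omega_l^{2m,2m-1}$, whose coefficient $(\Omega_l)_{\mu\nu}$ has its first index on the even site $2m$ and its second index on the odd site $2m-1$. I first contract the state leg at $2m+1$ with the $2m+1$ leg of $e_{s'}^{2m+1,2m}$. Since that effect is symmetric, this yields an effect on site $2m$ proportional to $\tfrac{c}{2}(u+s'\sum_\mu y_\mu\phi_\mu)$. Contracting this with $\omega_l^{2m,2m-1}$ on site $2m$ leaves the state $\tfrac{c}{2}(1,s'\Omega_l^\top y)^\top$ on site $2m-1$, because the contraction runs over the first index of $\Omega_l$. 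I have to be careful with the ordering here. Let the right-to-left traversal meet labels $l_L,\dots,l_1$ in that order. Each step then multiplies on the left by $\Omega_{l_j}^\top$, so after all $L$ steps the Bloch vector is $\Omega_{l_1}^\top\cdots\Omega_{l_L}^\top v'=(\Omega_{l_L}\cdots\Omega_{l_1})^\top v'=\Omega_w^\top v'$. This gives Eq.~\eqref{eq:odd-teleportation}.

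\textbf{Main obstacle.} The only genuine subtlety is bookkeeping. The induction has to be phrased in terms of subnormalised states with an explicit scalar prefactor. Signs must be shown to multiply. Above all, the index placement in Eq.~\eqref{Eq:S2} must be tracked, so that the even direction gives $\Omega_w$ and the odd direction gives its transpose in the correct order. The directional claim, that even states cannot go left and odd states cannot go right, follows from the bond pattern. An even state at $2n$ can only meet the effect bond $(2n+1,2n)$ on its right, since the bond on its left, $(2n,2n-1)$, is a state bond and cannot be contracted with a state. The same reasoning applies mirrored to odd sites.
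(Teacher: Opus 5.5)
Your proposal is correct and follows the paper's route. The paper's proof is a direct one-step contraction of Eqs.~\eqref{Eq:E2} and \eqref{Eq:S2}, giving $(1,x)^\top\mapsto\frac12(1,\pm\Omega_l x)^\top$ for an even input and $\frac12(1,\pm\Omega_l^\top x)^\top$ for an odd input, iterated with a factor $1/2$ per step and with the product order (or its transpose) tracked; your induction makes this bookkeeping explicit. One small slip: in the odd case the bond $(2m+1,2m)$ is an effect bond, and the state bond to its left is $(2m,2m-1)$, which is the bond your computation actually uses.
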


Lemma~\ref{lem:directional-teleportation} is the key link with the matrix problems. It says that a path of $L$ teleportation steps does nothing more complicated than multiply the input Bloch vector by the ordered product $\Omega_w$, or by its transpose when the path is traversed in the opposite direction. The factor $2^{-L}$ is the probability weight contributed by the $L$ measurement outcomes and is always positive. The directional teleportation is represented graphically in Fig.~\ref{fig:notation-directional}(b).

\begin{proof}
For one step, direct contraction of
Eqs.~\eqref{Eq:E2} and \eqref{Eq:S2} gives
\[
\begin{pmatrix}1\\v\end{pmatrix}
\longmapsto
\frac12
\begin{pmatrix}1\\\pm\Omega_l v\end{pmatrix}
\]
for an even input and
\[
\begin{pmatrix}1\\v'\end{pmatrix}
\longmapsto
\frac12
\begin{pmatrix}1\\\pm\Omega_l^{\top}v'\end{pmatrix}
\]
for an odd input.  Iteration multiplies the normalisation by $1/2$ at
each step.  Reading the state bonds from left to right gives the
ordered product in Eq.~\eqref{eq:Omega-word}; propagation in the
opposite direction gives its transpose.
\end{proof}

\begin{lemma}[Classification of finite outcome probabilities]
\label{lem:probability-classification}
Every connected closed finite experiment constructed from
Eqs.~\eqref{Eq:local-even-odd}--\eqref{Eq:S2} determines a word $w$:
the labels of all its state bonds, read from left to right.  Writing
$L=|w|\geq 0$ and using Eq.~\eqref{eq:Omega-word}, its outcome probability
has one of the following four forms:
\begin{align}
P^1
&=
\frac{p}{2^L}
\left(1+s\,f^{\top}\Omega_wv\right),
\label{Eq:P1}\\
P^2
&=
\frac{p'}{2^L}
\left(1+s\,v'^{\top}\Omega_wf'\right),
\label{Eq:P2}\\
P^3
&=
\frac{pp'}{2^{L-1}}
\left(1+s\,f^{\top}\Omega_wf'\right),
\label{Eq:P3}\\
P^4
&=
\frac{1}{2^{L+1}}
\left(1+s\,v'^{\top}\Omega_wv\right).
\label{Eq:P4}
\end{align}
Here $s\in\{+1,-1\}$ is the product of all bipartite
measurement-outcome signs on the path.  More precisely:
\begin{itemize}
\item $P^1$ and $P^2$ contain $L$ state bonds and $L$ effect
bonds;
\item $P^3$ contains $L\geq1$ state bonds and $L-1$ effect bonds;
and
\item $P^4$ contains $L$ state bonds and $L+1$ effect bonds.
\end{itemize}
When a path contains no bipartite effect outcome, $s=+1$.
The four cases are illustrated in
Fig.~\ref{fig:probability-cases}.
Every disconnected closed finite experiment is a product of probabilities
of these four forms.  Coarse-grained outcome probabilities are finite
sums of such products.
\end{lemma}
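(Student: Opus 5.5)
The plan is to analyse the combinatorial structure of a connected closed network first, and then evaluate it with Lemma~\ref{lem:directional-teleportation}. Each site $m$ carries one state-side tensor and one effect-side tensor, and they are contracted through the single leg labelled $m$. On the state side, site $m$ is covered either by a local state or by a bipartite state generator. Bipartite state generators only occupy odd--even bonds $(2n+2,2n+1)$. On the effect side, site $m$ is covered either by a local effect or by a bipartite effect outcome, and these only occupy even--odd bonds $(2n+1,2n)$.

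View the network as a graph whose vertices are the tensors and whose edges are the contracted site legs. Every site has degree two in this graph: one state-side leg and one effect-side leg. Local tensors have degree one and bipartite tensors have degree two. A connected component is therefore a path or a cycle. A cycle would use only bipartite tensors and so cover infinitely many sites, because the bond types alternate and strictly advance the site index. Finite support rules this out, so every connected component is a path. Along the path, bipartite state and effect bonds alternate, and the site index is monotone, since each bond shifts the index by one in the same direction. The state bonds, read left to right, therefore determine a word $w$ with consecutive labels $l_1,\dots,l_L$.

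Next, classify the paths by their two endpoints. Each endpoint is a local state or a local effect, and its parity is forced by which bond type it abuts. A left endpoint adjacent to an effect bond at $(2n,2n+1)$ sits on the even site $2n$ and must be covered on the state side, so it is a local even state $\omega^{2n}(v)$. A left endpoint adjacent to a state bond sits on an odd site and is a local odd effect $e(p',f')$. Symmetrically, the right endpoint is an odd state $\omega(v')$ if it follows an effect bond, and an even effect $e(p,f)$ if it follows a state bond. This gives four cases, and the bond counts follow:
\begin{itemize}
\item even state to even effect: $L$ state bonds and $L$ effect bonds;
\item odd effect to odd state: $L$ and $L$;
\item odd effect to even effect: $L$ and $L-1$;
\item even state to odd state: $L$ and $L+1$.
\end{itemize}
The degenerate case $L=0$ with no bonds at all, a local state paired with a local effect on one site, falls under $P^1$ or $P^2$.

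To evaluate each case, propagate a state with Lemma~\ref{lem:directional-teleportation} and then pair it with the remaining endpoint. For $P^1$, teleport $\omega^{2n}(v)$ to the right to get $2^{-L}(1,s\Omega_w v)$, and pair it with $p(1,f)$. For $P^2$, teleport the odd state to the left to get $2^{-L}(1,s\Omega_w^\top v')$, and pair it with $p'(1,f')$; this gives $v'^\top\Omega_w f'$. For $P^4$, teleport the even state right through all $L$ state bonds, then contract the last effect bond with the odd state. A single effect outcome $e_\pm$ pairs $(1,a)$ with $(1,b)$ to give $\tfrac12(1\pm b^\top a)$, which produces $2^{-(L+1)}(1+s v'^\top\Omega_w v)$.

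The case $P^3$ is the one I expect to need care. I would treat the effect $p'(1,f')$ at the left end as if it were a vector $p'(1,f')$ entering the first state bond. Contracting that state bond with this effect yields $p'(1,\Omega_{l_1}f')$ on the next even site. Each later effect-plus-state pair then acts as one teleportation step, contributing $\tfrac12(1,\pm\Omega_{l_j}\cdot)$. So $L-1$ effect bonds give a factor $2^{-(L-1)}$, and the final pairing with $p(1,f)$ gives the stated $P^3$. The bookkeeping to check is the order of the product, $\Omega_{l_L}\cdots\Omega_{l_1}$ acting on $f'$, and the transpose conventions in $\Omega_l$ under left-versus-right contraction, using the index placement $(\Omega_l)_{\mu\nu}\phi^{2n+2}_\mu\phi^{2n+1}_\nu$. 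In each case $s$ is the product of the outcome signs.

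For a disconnected network, its scalar is the product of the values of its components, since the contraction factorises. A coarse-grained outcome is a sum of fine-grained outcomes, so its probability is a finite sum of such products.
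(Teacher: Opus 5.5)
Your proposal is correct and follows the paper's proof. You show every connected closed network is a path along the chain, classify the four endpoint types, evaluate by successive contraction via Lemma~\ref{lem:directional-teleportation}, and then factorise over components and sum over coarse-grainings. Your tensors-as-vertices graph with a monotonicity argument against cycles, and your explicit $P^3$ computation starting from the odd local effect (giving $p'\,(1,\Omega_{l_1}f')^{\top}$ after the first state bond), are faithful elaborations of steps the paper only sketches.
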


The four expressions have a simple interpretation. In $P^1$ and $P^2$, a local state enters one end of a path and a local effect closes the other. In $P^3$, local effects close both ends of a path that contains at least one bipartite state. In $P^4$, local states enter both ends and the path contains at least one bipartite effect. These are the only possibilities because a finite subgraph of a one-dimensional chain has no branches or cycles.

\begin{figure*}[t]
\centering
\resizebox{\textwidth}{!}{%
\begin{tikzpicture}[x=1.05cm,y=1.05cm]
  \begin{scope}[xshift=0.4cm]
    \node[diagram panel title] at (2.4,2.00)
      {(a) $P^1$: even state to even effect};
    \foreach \x/\lab in {
      0/{2n},1.2/{2n+1},2.4/{2n+2},
      3.6/{2n+3},4.8/{2n+4}
    }{\diagramsitelabel{\x}{\lab}}
    \diagramlocalstate{0}{\omega(v)}
    \diagrambipeffect{0}{1.2}{e_{s_1}}
    \diagrambipstate{1.2}{2.4}{\omega_{l_1}(\Omega_{l_1})}
    \diagrambipeffect{2.4}{3.6}{e_{s_2}}
    \diagrambipstate{3.6}{4.8}{\omega_{l_2}(\Omega_{l_2})}
    \diagramlocaleffect{4.8}{e(p,f)}
    \foreach \x in {0,1.2,2.4,3.6,4.8}{\diagramjunction{\x}}
    \node[diagram formula] at (2.4,-2.05)
      {$\displaystyle P^1=\frac{p}{2^2}
      \left(1+s_1s_2\,f^\top
      \Omega_{l_2}\Omega_{l_1}v\right)$};
  \end{scope}

  \begin{scope}[xshift=8.7cm]
    \node[diagram panel title] at (2.4,2.00)
      {(b) $P^2$: odd state to odd effect};
    \foreach \x/\lab in {
      0/{2n+1},1.2/{2n+2},2.4/{2n+3},
      3.6/{2n+4},4.8/{2n+5}
    }{\diagramsitelabel{\x}{\lab}}
    \diagramlocaleffect{0}{e(p',f')}
    \diagrambipstate{0}{1.2}{\omega_{l_1}(\Omega_{l_1})}
    \diagrambipeffect{1.2}{2.4}{e_{s_1}}
    \diagrambipstate{2.4}{3.6}{\omega_{l_2}(\Omega_{l_2})}
    \diagrambipeffect{3.6}{4.8}{e_{s_2}}
    \diagramlocalstate{4.8}{\omega(v')}
    \foreach \x in {0,1.2,2.4,3.6,4.8}{\diagramjunction{\x}}
    \node[diagram formula] at (2.4,-2.05)
      {$\displaystyle P^2=\frac{p'}{2^2}
      \left(1+s_1s_2\,v'^\top
      \Omega_{l_2}\Omega_{l_1}f'\right)$};
  \end{scope}

  \begin{scope}[yshift=-5.0cm]
    \node[diagram panel title] at (2.5,2.00)
      {(c) $P^3$: odd effect to even effect};
    \foreach \x/\lab in {
      0/{2n+1},1/{2n+2},2/{2n+3},
      3/{2n+4},4/{2n+5},5/{2n+6}
    }{\diagramsitelabel{\x}{\lab}}
    \diagramlocaleffect{0}{e(p',f')}
    \diagrambipstate{0}{1}{\omega_{l_1}(\Omega_{l_1})}
    \diagrambipeffect{1}{2}{e_{s_1}}
    \diagrambipstate{2}{3}{\omega_{l_2}(\Omega_{l_2})}
    \diagrambipeffect{3}{4}{e_{s_2}}
    \diagrambipstate{4}{5}{\omega_{l_3}(\Omega_{l_3})}
    \diagramlocaleffect{5}{e(p,f)}
    \foreach \x in {0,1,2,3,4,5}{\diagramjunction{\x}}
    \node[diagram formula] at (2.5,-2.05)
      {$\displaystyle P^3=\frac{pp'}{2^2}
      \left(1+s_1s_2\,f^\top
      \Omega_{l_3}\Omega_{l_2}\Omega_{l_1}f'\right)$};
  \end{scope}

  \begin{scope}[xshift=8.7cm,yshift=-5.0cm]
    \node[diagram panel title] at (2.5,2.00)
      {(d) $P^4$: even state to odd state};
    \foreach \x/\lab in {
      0/{2n},1/{2n+1},2/{2n+2},
      3/{2n+3},4/{2n+4},5/{2n+5}
    }{\diagramsitelabel{\x}{\lab}}
    \diagramlocalstate{0}{\omega(v)}
    \diagrambipeffect{0}{1}{e_{s_1}}
    \diagrambipstate{1}{2}{\omega_{l_1}(\Omega_{l_1})}
    \diagrambipeffect{2}{3}{e_{s_2}}
    \diagrambipstate{3}{4}{\omega_{l_2}(\Omega_{l_2})}
    \diagrambipeffect{4}{5}{e_{s_3}}
    \diagramlocalstate{5}{\omega(v')}
    \foreach \x in {0,1,2,3,4,5}{\diagramjunction{\x}}
    \node[diagram formula] at (2.5,-2.05)
      {$\displaystyle P^4=\frac{1}{2^3}
      \left(1+s_1s_2s_3\,v'^\top
      \Omega_{l_2}\Omega_{l_1}v\right)$};
  \end{scope}
\end{tikzpicture}%
}
\caption{\label{fig:probability-cases}
The four possible connected closed paths in the period-two
construction computed in Lemma \ref{lem:probability-classification}:
(a) an even state to an even effect, (b) an odd state to an odd
effect, (c) an odd effect to an even effect, and (d) an even state to
an odd state.  The displayed examples have, respectively,
$L=2,2,3,$ and $2$ state bonds.  The formulas in
Lemma~\ref{lem:probability-classification} apply to arbitrary word
length $L=|w|\geq 0$.  Each dashed cap contributes a factor $1/2$, its
outcome contributes the displayed sign, and the state-bond matrices
ordered from left to right form
$\Omega_w=\Omega_{l_L}\cdots\Omega_{l_1}$.}
\end{figure*}

\begin{proof}
Associate a graph with the tensor network: its vertices are the
elementary systems and its edges are the bipartite state generators
and bipartite effect outcomes. At each contracted site, exactly one
state leg is paired with exactly one effect leg. State edges occupy the bonds
$(2n+2,2n+1)$, while effect edges occupy the alternating bonds
$(2n+1,2n)$.  The graph is a finite subgraph
of the one-dimensional chain and therefore every connected component
is a path.

The endpoints of a closed path are either:
\begin{enumerate}
\item an even local state and an even local effect;
\item an odd local effect and an odd local state;
\item an odd and an even local effect; or
\item an even and an odd local state.
\end{enumerate}
Successive contraction along the path and
Lemma~\ref{lem:directional-teleportation} give
Eqs.~\eqref{Eq:P1}--\eqref{Eq:P4}, respectively.  Tensor contractions
on distinct connected components factorise.  Finally,
coarse-graining replaces elementary outcomes by sums of outcomes and
therefore produces finite sums of these products.
\end{proof}

We now show that consistency of the whole theory is equivalent to
nonnegativity of the four expressions
\eqref{Eq:P1}--\eqref{Eq:P4}.

\begin{lemma}[Completion from path probabilities]
\label{lem:path-completion}
For the period-two generators in Eqs.~\eqref{Eq:E2} and
\eqref{Eq:S2}, complemented with period two single-party GPTs, the closed generated state and effect cones form a
consistent locally tomographic multipartite GPT on every finite set
of sites if and only if the four expressions
\eqref{Eq:P1}--\eqref{Eq:P4} are nonnegative for all endpoint states
and effects and all words.
\end{lemma}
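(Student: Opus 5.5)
The proof splits into necessity and sufficiency, and only sufficiency takes real work.

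\emph{Necessity.} Every expression \eqref{Eq:P1}--\eqref{Eq:P4} is the weight of a closed finite experiment. A consistent generating set must give every closed network a nonnegative weight. So nonnegativity of the four families is necessary.

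\emph{Sufficiency.} Assume the four families are nonnegative for all endpoint states and effects and all words. For each finite $\Lambda$, build the generated cones $C(\mathcal S_{\rm gen}^{\Lambda})$ and $C(\mathcal E_{\rm gen}^{\Lambda})$ from finite experiments, and verify the consistency requirements one at a time.

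\emph{Mutual positivity.} Take a generated state, i.e.\ a conditional state from a finite experiment with open legs on $\Lambda$, possibly marginalised. Take a generated effect, possibly coarse-grained. Their pairing is the weight of a larger closed finite experiment obtained by gluing the two networks along $\Lambda$. The labels stay distinct and disjointness of supports is preserved on each side. If a site carries an open leg on both networks, the gluing contracts that state leg with that effect leg. Marginalisation inserts unit effects $u^n$, which are local effects with $p=1$, $f=0$. Coarse-graining is a finite sum. Lemma~\ref{lem:probability-classification} then writes the pairing as a finite sum of products of $P^1,\dots,P^4$, each nonnegative by hypothesis. Positivity then passes to cone closures by linearity and continuity.

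The main obstacle is checking that the glued object is again a legitimate finite experiment, so that Lemma~\ref{lem:probability-classification} really applies. Two points need care. First, at each site exactly one state leg must meet one effect leg. Second, the graph must remain a subgraph of the chain, with state edges only on odd--even bonds and effect edges only on even--odd bonds. I would argue this by tracking the pairing of open legs site by site and noting that the bond type of every edge is inherited from the generators.

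\emph{Remaining properties.} Properness and the generating property come from the local structure. The product cones $\bigotimes_{n\in\Lambda}C(\mathcal S_0^{A^n})$ and their effect analogues sit inside the generated cones, since disconnected single-site networks are allowed. These product cones are full-dimensional, so the generated cones are generating. For pointedness, I would show the generated state cone is contained in the dual of the product effect cone. That dual is pointed because product effects span the space; this needs full-dimensional local effects, which hold because the local pieces are GPTs. The generated effect cone is handled symmetrically.

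\emph{Normalisation.} The factorised unit $u^\Lambda=\bigotimes_n u^n$ is a product of local units. Each generator measurement sums to the product unit by construction, e.g.\ $e_++e_-=u^{2n+1}u^{2n}$, and each bipartite state generator has unit marginal normalisation. Consequently the outcome probabilities of any finite experiment sum to one. Local tomography holds by construction, since all objects live in the tensor product space.
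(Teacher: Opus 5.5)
Your overall route is the paper's: necessity by exhibiting $P^1$--$P^4$ as experiments, mutual positivity by gluing plus Lemma~\ref{lem:probability-classification}, generation from product cones, pointedness because product effects span, factorised units, and local tomography. The gap is in the step you single out as the main obstacle: what you propose to prove there is false. The glued object is in general \emph{not} a legitimate finite experiment. Both networks may use sites outside $\Lambda$, and tracking the open legs on $\Lambda$ never sees these collisions.

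For instance, take $\Lambda=\{0,4\}$. Let $\sigma$ be a local state at $0$ tensored with the state at $4$ obtained from a local even state at $2$, the outcome $e_s$ on $(3,2)$, and $\omega_l$ on $(4,3)$. Let $e$ be a local effect at $4$ tensored with the conditional effect at $0$ obtained from $e_s$ on $(1,0)$, $\omega_{l'}$ on $(2,1)$, and a local effect at $2$. Both networks are legitimate. Yet after gluing, site $2$ carries two state legs and two effect legs. So ``exactly one state leg meets one effect leg at each site'' fails, and Lemma~\ref{lem:probability-classification} does not apply to the glued network as a whole. The paper's wording, that pairing ``produces a closed finite experiment'', is equally loose. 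Its argument survives because it reasons component by component.

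The repair is to treat the site occurrences of the two networks as distinct vertices. Each vertex then carries at most one state tensor and at most one effect tensor, so every component is a path or a cycle. At an even site $2m$, a state generator can only sit on $(2m,2m-1)$ and an effect outcome on $(2m+1,2m)$; at an odd site the directions are reversed. A component passing through a vertex enters along one edge and leaves along the other, which points the same way, so the site label is strictly monotone along every component. Hence there are no cycles, and each component uses every label at most once. Each component is therefore a genuine path experiment of one of the four types, and the contraction factorises into a product of $P^1$--$P^4$.

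Separately, you never check what the paper's ``probability bounds and compactness'' step supplies: that $u^\Lambda$ is strictly positive on nonzero generated states, so the normalised sections are compact. Properness alone does not give this. The check is quick. For a product effect $e$ from a product measurement, $e$ and $u^\Lambda-e$ are both generated effects, so $0\le e^\top\sigma\le (u^\Lambda)^\top\sigma$. Then $(u^\Lambda)^\top\sigma=0$ forces $\sigma$ to vanish on a spanning set of product effects, hence $\sigma=0$.
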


\begin{proof}
Necessity is immediate: each expression
$P^1$--$P^4$ is the outcome probability of a valid finite experiment
and must therefore be nonnegative in any consistent GPT. We prove
sufficiency by verifying the consistency requirements.
For every finite set of sites $\Lambda$, we must show that:
\begin{enumerate}
\item the generated state and effect cones are mutually positive;
\item both cones are proper and generating, and the normalised
state space is compact;
\item product states and effects are included, and the cones are closed
under conditioning, tensor products, and marginalisation;
\item the unit effect factorises and all specified measurements are
normalised; and
\item product effects distinguish multipartite states, so that the
composite is locally tomographic.
\end{enumerate}

\emph{Mutual positivity.}
Let $\sigma$ be a generated state and let $e$ be a generated effect.
Before taking nonnegative linear combinations and limits, each is
represented by a finite tensor network with some open legs. Pairing
$e$ with $\sigma$ joins these open legs and produces a closed finite
experiment. Every connected component of this closed network has one
of the four forms $P^1$--$P^4$ listed in
Lemma~\ref{lem:probability-classification}. By hypothesis, each such
factor is nonnegative. Disconnected components multiply, while
coarse-graining and nonnegative linear combinations add with
nonnegative coefficients. It follows that every such closed
network has nonnegative weight. Bilinearity and continuity extend
this conclusion to the closed generated cones:
\[
e^\top\sigma\geq0
\qquad
\text{for all }
e\in C(\mathcal E_{\rm gen}^{\Lambda}),\quad
\sigma\in C(\mathcal S_{\rm gen}^{\Lambda}).
\]
This is the mutual-positivity condition.

\emph{Properness and generation.}
The local state and effect cones are generating, so they contain
spanning sets of states and effects. Their tensor products therefore
span the multipartite state and effect vector spaces on every finite
set $\Lambda$. Since the generated cones contain these product states
and effects, both generated cones are generating. They are convex
cones and are closed by construction.

It remains to prove that they are pointed. Suppose that both
$\sigma$ and $-\sigma$ belong to
$C(\mathcal S_{\rm gen}^{\Lambda})$. Mutual positivity then gives
\[
e^\top\sigma\geq0
\quad\text{and}\quad
e^\top(-\sigma)\geq0
\]
for every generated effect $e$, and hence
$e^\top\sigma=0$. In particular, $\sigma$ vanishes against every
product effect. Since the product effects span the full dual space,
this implies $\sigma=0$. Thus the generated state cone is pointed.
The same argument, with states and effects interchanged, shows that
the generated effect cone is pointed. The two cones are therefore
proper.

\emph{Operational closure.}
Product states and effects, tensor products, and marginals were
included in the construction of the generated cones. Conditioning
also preserves these cones: applying a generated effect to part of a
generated state simply joins the two corresponding finite networks.
The resulting network is therefore another generated conditional
state. The analogous statement holds for conditional effects.
Taking closures preserves all these operations because tensor
contraction is continuous.

\emph{Units and measurements.}
By construction, the unit effect on $\Lambda$ is
\[
u^\Lambda=\bigotimes_{n\in\Lambda}u^n.
\]
Each generating multipartite measurement sums to the product unit
effect on its support. The same is therefore true for its translates,
for product measurements, and for their coarse-grainings. Hence all
specified measurements are normalised.

\emph{Probability bounds and compactness.}
The physical effects associated with the generated effect cone form
the order interval
\begin{equation}
\mathcal E_{\rm gen}^{\Lambda}
=
C(\mathcal E_{\rm gen}^{\Lambda})
\cap
\bigl(u^\Lambda-C(\mathcal E_{\rm gen}^{\Lambda})\bigr),
\label{eq:orderinterval}
\end{equation}
as in Eq.~\eqref{eq:physical_effects}. Thus, if $e$ is a physical
effect, both $e$ and its complementary effect $u^\Lambda-e$ belong to
the generated effect cone. Mutual positivity consequently gives
\[
0\leq e^\top\sigma
\leq (u^\Lambda)^\top\sigma=1
\]
for every normalised state $\sigma$. This proves the
upper probability bound.

The same observation shows that the unit effect is strictly
positive on every nonzero generated state. Indeed, suppose that
$\sigma\in C(\mathcal S_{\rm gen}^{\Lambda})$ satisfies
$(u^\Lambda)^\top\sigma=0$. Every product effect belonging to a
product measurement is a physical effect, so
Eq.~\eqref{eq:orderinterval} and mutual positivity imply
\[
0\leq e^\top\sigma
\leq (u^\Lambda)^\top\sigma=0.
\]
Thus $\sigma$ vanishes against all product effects. Since these effects
span the full dual space, $\sigma=0$. The generated state cone is
closed and finite-dimensional, and $u^\Lambda$ is strictly positive
on its nonzero elements. Its normalised section
\[
\mathcal S_{\rm gen}^{\Lambda}
=
\left\{
\sigma\in C(\mathcal S_{\rm gen}^{\Lambda}):
(u^\Lambda)^\top\sigma=1
\right\}
\]
is therefore compact.

\emph{Local tomography.}
Finally, the multipartite vector space is the tensor product of the
local vector spaces, and the product effects span its dual. Hence two
multipartite states that give the same probabilities for every product
effect must be equal. The completed theory is therefore locally
tomographic. All the consistency requirements have now been verified,
and their only nontrivial positivity conditions are the four families
of path probabilities in
Eqs.~\eqref{Eq:P1}--\eqref{Eq:P4}.
\end{proof}

\subsection{Proof of Theorem \ref{Theo:undecidability3}}

This proof is the teleportation version of the proof of Theorem~\ref{Theo:undecidability1}. We store the matrices $\Omega_l$ in the bipartite state generators. If all their products are bounded, small hypersphere GPTs at the individual sites make all the probabilities non-negative. If the products are unbounded, any full-dimensional local GPT contains enough  states and effects for a sufficiently large product to create a negative $P^1$. Thus a consistent local completion exists exactly in the bounded case.

\begin{proof}[Proof of Theorem \ref{Theo:undecidability3}]
Let
\[
\{\Omega_l\in\mathbb{Q}^{d\times d}\}_{l=1,\ldots,k}
\]
be an instance of the matrix-product boundedness problem.  From these
matrices construct the period-two multipartite generators
of Eqs.~\eqref{Eq:E2} and \eqref{Eq:S2}.  There are two effect
generators and $k$ state generators up to translations, and all
their coordinates are rational.  It remains to prove that these
multipartite generators admit a period-two single-party completion
if and only if the products $\Omega_w$ are uniformly bounded.

\paragraph{Bounded products imply consistency.}
Suppose that there is a constant $\Lambda > 0$ such that
\[
\lVert\Omega_w\rVert\leq\Lambda
\qquad\text{for every word }w.
\]
Choose a rational $\delta>0$ with $\delta^2\Lambda\leq1$ and use, at
every site, the hypersphere GPT from
section \ref{ToyExample}, with
$\epsilon=\epsilon'=\delta$.  This is a finite description and is
translation-invariant with period two.

All state and effect Bloch vectors in this completion have norm at
most $\delta$.  Hence
\begin{align*}
\lvert f^\top\Omega_wv\rvert
&\leq\delta^2\Lambda\leq1,
\\
\lvert v'^\top\Omega_wf'\rvert
&\leq\delta^2\Lambda\leq1,
\\
\lvert f^\top\Omega_wf'\rvert
&\leq\delta^2\Lambda\leq1,
\\
\lvert v'^\top\Omega_wv\rvert
&\leq\delta^2\Lambda\leq1.
\end{align*}
Equations~\eqref{Eq:P1}--\eqref{Eq:P4} are therefore nonnegative.
Lemma~\ref{lem:path-completion} then shows that the multipartite
generating set is consistent.

\paragraph{Unbounded products preclude consistency.}
Conversely, suppose that the products $\Omega_w$ are unbounded and,
for contradiction, that a consistent period-two single-party
completion exists.  Apply part~(i) of
Lemma~\ref{lem:hypersphere-Balls} to the GPT at the
even sites.  It gives $v_\ast\in\mathbb{R}^d$ and
$\epsilon,\epsilon'>0$ satisfying
Eqs.~\eqref{eq:state-open-set} and \eqref{eq:effect-open-set}.

By unboundedness, choose a nonempty word $w$ such that
\[
\lVert\Omega_w\rVert>
(\epsilon\epsilon')^{-1}.
\]
There are unit vectors $n,n'$ for which
\[
n'^\top\Omega_wn=\lVert\Omega_w\rVert.
\]
Set
\begin{align*}
a&=n'^\top\Omega_wv_\ast,
\\
b&=\epsilon n'^\top\Omega_wn
=\epsilon\lVert\Omega_w\rVert.
\end{align*}
Using
\[
\max\{|a+b|,|a-b|\}\geq |b|,
\]
choose $\sigma\in\{+1,-1\}$ such that
$|a+\sigma b|\geq|b|$, and define
\[
c_\sigma
:=
\epsilon'n'^\top\Omega_w
\bigl(v_\ast+\sigma\epsilon n\bigr).
\]
Then
\begin{align}
\lvert c_\sigma\rvert
&=\epsilon'\lvert a+\sigma b\rvert
\nonumber\\
&\geq\epsilon'|b|
\nonumber\\
&=\epsilon\epsilon'\lVert\Omega_w\rVert
\nonumber\\
&>1.
\label{eq:large-correlation}
\end{align}

Use the even state
$\omega_\sigma=(1,v_\ast+\sigma\epsilon n)^\top$ and the even effect
$e=\frac12(1,\epsilon'n')^\top$.  Teleport the state along $w$ and
choose the product $s$ of the intervening outcomes to have the sign
opposite to $c_\sigma$.  Then
Eq.~\eqref{Eq:P1} gives
\begin{align}
P^1
&=
\frac{1}{2^{|w|+1}}
\left(1-\lvert c_\sigma\rvert\right)
\nonumber\\
&<0,
\end{align}
contradicting consistency.

We have thus proved that the multipartite generators admit a consistent
period-two single-party completion exactly when the semigroup
generated by $\{\Omega_l\}_{l=1}^k$ is bounded.  Since the latter
decision problem is undecidable by
Theorem~\ref{thm:unboundedness}, the claimed consistency problem is
undecidable.
\end{proof}

\subsection{Proof of Theorem \ref{Theo:undecidability4}}

The second proof similarly mirrors Theorem~\ref{Theo:undecidability2}. We now choose $\Omega_l=M_l$ from a probabilistic automaton. Stochasticity makes all matrix products bounded, so the multipartite generators are consistent with some small local GPTs. We then fix slightly larger local GPTs containing a state built from $q$ and two complementary effects built from $F$. The uniform bounds make every path probability nonnegative except, potentially, the path that begins at $q$ and ends at one of these two effects. This probability is positive for every path exactly when the automaton never exceeds its cut point.

\begin{proof}[Proof of Theorem \ref{Theo:undecidability4}]
We reduce the strict-emptiness problem of
Theorem~\ref{thm:emptiness}, restricted to the instances described
after that theorem.  Thus we take stochastic matrices
$\{M_l\in\mathbb Q^{d\times d}\}_{l=1}^k$,
\[
q=(1,0,\ldots,0)^\top,
\qquad
F=(0,1,0,\ldots,0)^\top,
\]
which imply $F^\top q=0$, and a rational cut point
$0<\lambda<1$.

From the 
matrices $ M_l $ construct the period-two multipartite generators
of Eqs.~\eqref{Eq:E2} and \eqref{Eq:S2}, with $\Omega_l=M_l$.  There are two effect
generators and $k$ state generators up to translations, and all
their coordinates are rational. Since the matrices
$M_l$ are column-stochastic,  Lemma~\ref{lemma3} gives
$\lVert M_w\rVert\leq\sqrt d$ for every word $w$.  The
bounded-products construction in the proof of
Theorem~\ref{Theo:undecidability3} therefore provides a
period-two hypersphere completion for them.

We now specify the single-party GPTs that form the other part of the
input.  Choose a rational number $\delta>0$ such that
\begin{equation}
\delta\leq\frac12,
\qquad
\delta\leq\frac{\lambda}{2\sqrt d},
\label{eq:multipartite-delta-choices}
\end{equation}
and define
\[
\omega_q:=
\begin{pmatrix}1\\q\end{pmatrix},
\qquad
e_F^\pm:=
\frac12
\begin{pmatrix}1\\ \pm F/\lambda\end{pmatrix}.
\]

At the even sites, take
\begin{align}
\mathcal{S}_0^{A^{2n}}
&=
\operatorname{conv}
\left(
\Sigma_{1,\delta}\cup\{\omega_q\}
\right),
\nonumber\\
\mathcal{E}_0^{A^{2n}}
&=
\operatorname{conv}
\left(
\Sigma_{1/2,\delta}
\cup\{0,u,e_F^+,e_F^-\}
\right).
\label{eq:theorem6-even-gpt}
\end{align}
This is
precisely the single-party GPT constructed in
Eqs.~\eqref{eq:genS0} and \eqref{eq:genE0}, with
$\epsilon=\epsilon'=\delta$.  Equation~\eqref{eq:multipartite-delta-choices}
meets the bounds used in the proof of
Theorem~\ref{Theo:undecidability2}; hence it is  a valid
single-party GPT. 

At the odd sites, take the hypersphere GPT
\begin{align}
\mathcal{S}_0^{A^{2n+1}}
&=\Sigma_{1,\delta},
\nonumber\\
\mathcal{E}_0^{A^{2n+1}}
&=
\operatorname{conv}
\left(
\Sigma_{1/2,\delta}\cup\{0,u\}
\right).
\label{eq:theorem6-odd-gpt}
\end{align}

  The resulting family of local GPTs has period two
and a finite description.

It remains to decide whether these fixed single-party GPTs are
compatible with the multipartite generators.  
From Lemma \ref{lem:path-completion}, it 
 suffices to check the probabilities listed in 
Eqs.~\eqref{Eq:P1}--\eqref{Eq:P4}.  For Bloch vectors $x,x'$ with
$\lVert x\rVert,\lVert x'\rVert\leq\delta$,
Lemma~\ref{lemma3} implies
\begin{align}
\lvert x'^\top M_wx\rvert
&\leq\delta^2\sqrt d,
\nonumber\\
\lvert x'^\top M_wq\rvert
&\leq\delta,
\nonumber\\
\left\lvert
\frac1\lambda F^\top M_wx
\right\rvert
&\leq\frac{\delta\sqrt d}{\lambda}.
\label{eq:theorem6-uniform-correlations}
\end{align}
The choices in Eq.~\eqref{eq:multipartite-delta-choices} make all
three bounds strictly smaller than one.  These inequalities cover
every correlation in $P^2$, $P^3$, and $P^4$, and every correlation
in $P^1$ except the one using the even state $\omega_q$ and one of
the even effects $e_F^\pm$.  Null and unit endpoint effects cause no
additional condition.

For the exceptional case, Eq.~\eqref{Eq:P1} gives
\begin{equation}
P^1
=
\frac{1}{2^{L+1}}
\left(
1+\frac{s\tau}{\lambda}F^\top M_wq
\right),
\qquad
\tau\in\{+1,-1\},
\label{eq:theorem6-acceptance-probability}
\end{equation}
where $\tau$ labels the choice $e_F^\tau$ and $s$ is the product of
the intervening bipartite-effect outcomes.  Since
$F^\top M_wq\geq0$, all these probabilities are nonnegative if and
only if
\[
F^\top M_wq\leq\lambda
\qquad\text{for every word }w.
\]
For the empty word this condition is satisfied because
$F^\top q=0$.  If it fails for a nonempty word, one may choose the
signs so that $s\tau=-1$, producing a negative value in
Eq.~\eqref{eq:theorem6-acceptance-probability}.

This shows that the combined
multipartite generating set is consistent exactly when  the
acceptance probabilities satisfy the cut point inequality $F^{\chg{\top}} M_w q \leq \lambda$ for all words $w$.  This condition is the strict-emptiness problem of
Theorem~\ref{thm:emptiness}, and is therefore undecidable.
\end{proof}

\section{Conclusion}

We have shown that deciding the consistency of certain extensions of GPTs is undecidable: this holds for the addition of discrete-time dynamics, and for the addition of entangled state and effect generators in the locally tomographic, period-two translation-invariant setting studied here. These results can be put in parallel with works showing that many problems in physics are undecidable, see e.g. \cite{WCPG11,C15,PCGPW25}. In the present case, the undecidable question is whether the specified data admit a consistent GPT extension.

The mechanism is the same in all four results. Although only a finite number of states, effects, and/or transformations are given, and have a finite description, they generate processes of arbitrary length. These processes contain arbitrary products of a finite set of matrices. If a product becomes too large, or makes a probabilistic automaton cross its cut point, a suitable state and effect turn that event into a negative probability. The difficulty is therefore not checking any one experiment, but deciding whether every experiment of every finite length remains consistent.

\begin{acknowledgments} I would like to extend my deepest thanks to Elie Wolf and Maria Ciudad-Alañón for insightful discussions that initiated this research. This research was supported in part by Perimeter Institute for Theoretical Physics. Research at Perimeter Institute is supported by the Government of Canada through the Department of Innovation, Science, and Economic Development, and by the Province of Ontario through the Ministry of Colleges and Universities.  The revised version of this manuscript was edited with the help of ChatGPT 5.6 Sol.

\end{acknowledgments}

\end{document}